\newcommand{\eat}[1]{}
\newtheorem{definition}{Definition}
\newtheorem{theorem}{Theorem}
\def\0{{\mathbf 0}}
\def\1{{\mathbf 1}}
\DeclareMathOperator*{\argmax}{arg\,max}
\newcommand{\X}{\mathcal{X}}
\newcommand{\U}{\mathcal{U}}
\newcommand{\N}{\mathcal{N}}
\newcommand{\ra}{\mbox{$\rightarrow$}}
\newcommand{\reals}{\mathbb{R}}
\newcommand{\SPhard}{\textbf{\#P}-hard}
\newcommand{\NPhard}{\textbf{NP}-hard}
\newcommand{\promax}{{\sc ProMax}}
\newcommand{\infmax}{\textsc{InfMax}}
\newcommand{\Nin}{\mathrm{N}^\mathit{in}} 
\newcommand{\vals}{\mathbf{v}}
\newcommand{\p}{\mathbf{p}}
\newcommand{\greedy}{\textsf{Greedy}}
\newcommand{\UG}{\textsf{U-Greedy}}
\newcommand{\allomp}{\textsf{All-OMP}}
\newcommand{\ffs}{\textsf{FFS}}
\newcommand{\page}{\textsf{PAGE}}
\newcommand{\erf}{\mathrm{erf}}
\newcommand{\pihat}{\hat{\pi}}
\begin{document}

\title{Profit Maximization over Social Networks\footnote{An abbreviated version of this paper appears in the {\em Proceedings of the 12th IEEE International Conference on Data Mining (ICDM 2012)}, Brussels, Belgium, December 10 -- 13, 2012. The copyright of the conference version belongs to IEEE.}}

\author{
\begin{tabular}{cc}
 	Wei Lu & Laks V.S. Lakshmanan \\
	Dept. of Computer Science & Dept. of Computer Science \\
	University of British Columbia & University of British Columbia \\
	Vancouver, B.C., Canada & Vancouver, B.C., Canada \\
	{\tt welu@cs.ubc.ca} & {\tt laks@cs.ubc.ca} \\
\end{tabular}
}

\maketitle

\begin{abstract}
Influence maximization is the problem of finding a set of influential users in a social network such that the expected spread of influence under a certain propagation model is maximized. 
Much of the previous work has neglected the important distinction between social influence and actual product adoption.
However, as recognized in the management science literature, an individual who gets influenced by social acquaintances may not necessarily adopt a product (or technology), due, e.g., to monetary concerns.
In this work, we distinguish between influence and adoption by explicitly modeling the states of being influenced and of adopting a product. 
We extend the classical Linear Threshold (LT) model to incorporate prices and valuations, and factor them into users' decision-making process of adopting a product.
We show that the expected profit function under our proposed model maintains submodularity under certain conditions, but no longer exhibits monotonicity, unlike the expected influence spread function. 
To maximize the expected profit under our extended LT model, we employ an unbudgeted greedy framework to propose three profit maximization  algorithms.
The results of our detailed experimental study on three real-world datasets demonstrate that of the three algorithms, \textsf{PAGE}, which assigns prices dynamically based on the profit potential of each candidate seed, has the best performance both in the expected profit achieved and in running time.
\end{abstract}


\section{Introduction}\label{sec:intro}
The rapidly increasing popularity of online social networking
	sites such as Facebook, Google+, and Twitter has facilitated immense
	opportunities for large-scale viral marketing.
Viral marketing was first introduced to the data mining community
	by Domingos and Richardson~\cite{domingos01, richardson02};
it is a cost-effective method to promote a new product
	(or technology) by giving free or discounted samples 
	to a selected group of influential individuals,
	in the hope that through the word-of-mouth effects over the social
	network, a large number of product adoptions will occur.

Motivated by viral marketing, \emph{influence maximization} (\infmax{})
	has emerged as a fundamental problem concerning the propagation
	of innovations through social networks.
In their seminal paper, Kempe et al.~\cite{kempe03} formulated
	\infmax\ as a discrete optimization problem: 
given a directed graph $G=(V,E)$ (representing a social network)
	with pairwise influence weights on edges, and a positive number $k$,
	find $k$ individuals or seeds, such that by activating them initially, the
	{\em expected spread of influence} (or \emph{spread} for short)
	in the network under a certain propagation model is maximized.
Two classical propagation models studied in \cite{kempe03} are the
	{\em Independent Cascade} (IC) and the {\em Linear Threshold}
	(LT) model.
In this paper, we focus on the LT model, the details of which are
	deferred to Sec.~\ref{sec:bgd}.

The expected spread of influence of any set $S\subseteq V$, denoted by $h(S)$\footnote{The standard notation for the influence function is $\sigma$~\cite{kempe03}, but since $\sigma$ is used for the normal distribution $\N(\mu,\sigma^2)$ in the paper, we use $h$ here.}, is defined as the expected number
	of activated nodes after the diffusion process starting from $S$ quiesces.
Under both IC and LT models, \infmax{} is \NPhard{} and the influence
	function $h$ is monotone and submodular.
A set function $f\colon 2^{X}\ra\, \reals$ is {\em monotone} if $f(S) \leq f(T)$ whenever $S \subseteq T \subseteq X$, where $X$ is the ground set.
The function $f$ is {\em submodular} if $f(S \cup \{x\}) - f(S) \geq f(T \cup \{x\}) - f(T)$ holds for all $S \subseteq T \subseteq X$ and $x \in X \setminus T$.
Submodularity naturally captures the law of diminishing marginal returns,
	a fundamental principle in microeconomics.
With these good properties, approximation guarantees can be provided
	for \infmax{}~\cite{kempe03}.

Although \infmax{} has been studied extensively, a majority of the previous work has 
	focused on the classical propagation models, namely IC and LT, which do not fully incorporate
	important {\em monetary} aspects in people's decision-making process
	of adopting new products. The importance of such aspects is seen in
	actual scenarios and recognized in the management
	science literature. 
As real-world examples, until recently, Apple's iPhone has seemingly created bigger buzz 
	in social media than any other smartphones.
However, its worldwide market
	share in 2011 fell behind Nokia, Samsung, and LG\footnote{IDC Worldwide
	Mobile Phone Tracker, July 28, 2011.}.
This is partly due to the fact that iPhone is pricier both in hardware (if one buys it
	contract-free and factory-unlocked) and in its monthly rate plans.
On the contrary, the HP TouchPad was shown little interest by the tablet market
	when it was initially priced at $\$499$ ($16$GB).
However, it was sold out
	within a few days after HP dropped the price substantially to $\$99$ ($16$GB)~\footnote{\url{http://www.pcworld.com/article/237088/hp_drops_touchpad_price_to_spur_sales.html}}.  

In management science, the adoption of a new product is
	characterized as a two-step process~\cite{kalish85}.
In the first step, ``{\em awareness}'', an individual gets exposed to
	the product and becomes familiar with its features.
In the second step, ``{\em actual adoption}'', a person who is aware of the product will purchase it if her valuation outweighs the price.
Product awareness is modeled as being propagated through the word-of-mouth
	of existing adopters, which is indeed articulated by classical propagation
	models. 
However, the actual adoption step is not captured in these 
	classical models and is indeed the gap
	between these models and that in~\cite{kalish85}.

In a real marketing scenario, viral or otherwise, products are priced and
	people have their own valuations for owning
	them, both of which are critical in making adoption decisions. 
Precisely, the {\em valuation} of a person for a certain product is
	the maximum money she is willing to pay for it;
	the valuation for not adopting is defined to be zero~\cite{klbBook}.
Thus, when a company attempts to maximize its expected profit in a
	viral marketing campaign, such monetary factors 
	need to be taken into account.
However, in \infmax{}, only influence weights
	and network structures are considered, and the marketing strategies
	are restricted to binary decisions: for any node in the network, an
	\infmax{} algorithm just decides whether or not it should be seeded.

To address the aforementioned limitations, we propose the problem of \emph{profit
	maximization} (\promax{}) 
	over social networks, by incorporating both prices and valuations.
\promax{} is  the problem of finding an optimal strategy to maximize the expected
	total profit earned by the end of an influence diffusion process under
	 a given propagation model.
We extend the LT model to propose a new propagation
	model named the {\em Linear Threshold model with user
	Valuations} (LT-V), which explicitly introduces the states
	{\em influenced} and {\em adopting}.
Every user will be quoted a price by the company, and an {\em influenced}
	user  adopts, i.e., transitions to {\em adopting},
	only if the price does not exceed her valuation.

As pointed out by Kleinberg and Leighton~\cite{bobkleinberg03},
	people typically do not want to reveal their valuations before the price
	is quoted for reasons of trust.
Moreover, for privacy concerns, after a price is quoted, they usually
	only reveal their decision of adoption (i.e., ``yes'' or ``no''),
	but do not wish to share information about their true valuations.
Thus, following the literature~\cite{bobkleinberg03,klbBook},
	we make the {\em independent private value} (IPV) assumption,
	under which the valuation of each user is drawn independently
	at random from a certain distribution. 
Such distributions can be learned by a marketing company from historical sales data.
Furthermore, our model assumes users to be {\em price-takers}
	who respond myopically to the prices offered to them, solely based on
	their privately-held valuations and the price offered.

Since prices and valuations are considered in the optimization,
	marketing strategies for \promax{} require non-binary decisions:
	for any node in the network, we (i.e., the system) need to decide whether or
	not to seed it, and what price should be quoted.
Given this factor, the objective function to optimize in \promax, i.e,
	the expected total profit, is a function of both the seed set
	and the vector of prices.
As we will show in Secs.~\ref{sec:model} and \ref{sec:algo}, since discounting
	may be necessary for seeds, the profit function is
	in general {\em non-monotone}.
Also, as we show, the profit function maintains submodularity {\sl for any fixed vector of prices},
	regardless of the specific forms of valuation distributions. 

In light of the above, \promax{} is inherently more complex than \infmax{},
	and calls for more sophisticated algorithms for its solution.
As the profit function is in the form of the difference between a monotone submodular set function and a linear function,
	we first design an ``unbudgeted'' greedy (\UG{}) framework for seed set selection.
In each iteration, it picks the node with the largest expected marginal
	profit until the total profit starts to decline.
We show that for any fixed price vector, \UG{} provides quality
	guarantees slightly lower than a $(1-1/e)$-approximation.
To obtain complete profit maximization algorithms,
	we propose to integrate \UG{} with three pricing strategies, which leads to
	three algorithms \allomp{} (Optimal Myopic Prices), \ffs{}
	(Free-For-Seeds), and \page{} (Price-Aware GrEedy).
The first two are baselines and choose prices in ad hoc ways, 
while \page{}
	dynamically determines the optimal price to be offered to each candidate seed
	in each round of \UG{}.
Our experimental results on three real-world network datasets illustrate that
	\page{} outperforms \allomp{} and \ffs{} in terms of expected
	profit achieved and running time, and is more robust against various
	network structures and valuation distributions.

\paragraph{Road-map.}
Sec.~\ref{sec:bgd} discusses related work.
Sec.~\ref{sec:model} describes the LT-V model and defines \promax{}.
Sec.~\ref{sec:algo} presents our profit maximization algorithms.
We discuss experiments in Sec.~\ref{sec:exp}, and present extensions and conclusions
	in Sec.~\ref{sec:discuss}.

\section{Background and Related Work}\label{sec:bgd}
Domingos and Richardson~\cite{domingos01,richardson02}
	first posed \infmax{} as a data mining problem.
They modeled the problem using Markov random fields and proposed
	heuristic solutions.
Kempe et al.~\cite{kempe03} studied \infmax{} as a discrete
	optimization problem, and utilized submodularity
	of the spread function $h$  to obtain a greedy $(1-1/e)$-approximation algorithm using the results in 
\cite{submodular} (see Algorithm~\ref{alg:greedy1}: {\sf Greedy}). 
\greedy{} starts from an empty set; in each iteration it 
	adds to $S$ the element with the largest marginal gain 
	until $|S| = k$.

\paragraph{The Linear Threshold Model.}
We now describe the LT model~\cite{kempe03} in detail.
In this model, each node $u_i$ chooses an activation threshold
	$\theta_i$ uniformly at random from $[0,1]$,
	representing the minimum weighted fraction of active in-neighbors
	necessary so as to activate $u_i$.
Each edge $(u_i, u_j)\in E$ is associated with an influence weight
	$w_{i,j}$; for each $u_j\in V$, $\sum_{u_i\in\Nin(u_j)} w_{i,j} \le 1$,
	where $\Nin(u_i)$ is the set of in-neighbors of $u_i$ (i.e.,
	the sum of incoming weights does not exceed $1$). 
Time proceeds in discrete steps.
At time $0$, a seed set $S$ is activated.
At any time $t \ge 1$, we activate any inactive $u_i$
	if the total influence weight from its active in-neighbors
	reaches or exceeds $\theta_i$. 
Once a node is activated, it stays active.
The diffusion process completes when no more nodes can be activated.

Chen et al.~\cite{ChenYZ10} showed that it is \SPhard{} to compute the exact expected spread of any node set
	in general graphs for the LT model.
Thus, a common practice is to estimate the spread using
	Monte-Carlo (MC) simulations, in which case the approximation
	ratio of \greedy{} drops to $1-1/e-\epsilon$, where $\epsilon > 0$ depends on the number of MC simulations run~\cite{kempe03}.
By further exploiting submodularity,
	\cite{Leskovec07} proposed the cost-effective
	lazy forward (CELF) optimization, which improves the running
	time of \greedy{} by up to $700$ times.

Recently, Bhagat et al.~\cite{smriti12} addressed the difference between product
	adoption and influence in their LT-C (Linear Threshold with Colors) model.
In LT-C, the extent to which a node is influenced by
	its neighbors depends on two factors:
	influence weights and the opinions of the neighbors.
LT-C also features a ``tattle'' state for nodes:
	if an influenced node does not adopt, it 
	may still propagate positive or negative influence to neighbors.
However, unlike us, the LT-C model does not consider monetary aspects
	in product adoption.

Considerable work has been done on pricing in social networks.
Hartline et al.~\cite{Hartline08} studied
	optimal marketing for digital goods in social networks
	and proposed the influence-and-exploit (IE) framework.
In IE, seeds are offered free samples, and the seller can approach other users in
	a random sequence, bypassing the network structure.
Arthur et al.~\cite{arthur09} adopted IE
	to study a similar problem in which users arrive in a sequence
	decided by a cascade model (IC).
However, in~\cite{arthur09}, seeds are given as input (with free samples offered), whereas
	in our case, the choice of the seed set and of the prices
	are driven by profit maximization. These choices are made 
	by the algorithms. 
Work in \cite{bloch08,zeyuan11} formulated
	pricing in social networks as simultaneous-move games
	and studied equilibria of the games, whereas we focus on
	stochastic propagation models with social influence.

{
	\begin{algorithm}[t!]
	\caption{\greedy{} ($G=(V,E)$, $k$, $h$)}\label{alg:greedy1}
	$S \gets \emptyset$\;
	\For {$i = 1 \rightarrow k$} {
		$u \gets \argmax_{u_i\in V\setminus S} \left[ h(S\cup\{u_i\}) - h(S) \right]$\;
		$S \gets S \cup \{u\}$\;
	}
	Output $S$\;
	\end{algorithm}
}

\section{Linear Threshold Model with User Valuations and Its Properties}\label{sec:model}
In Sec.~\ref{sec:modeldef}, we describe our proposed LT-V model and define the profit maximization problem (\promax{}).
We then study a restricted case of \promax{} and present theoretical results for it in Sec.~\ref{sec:basic}.
In Sec.~\ref{sec:property}, we establish the submodularity result for general \promax{}.

\subsection{Model and Problem Definition}\label{sec:modeldef}
In the LT-V model, the social network is modeled as a
	directed graph $G=(V,E)$, in which each node
	$u_i\in V$ is associated with a \emph{valuation} $v_i\in [0,1]$.
Recall that in Sec.~\ref{sec:intro}, we made the IPV assumption
	under which valuations are drawn independently at random from some
	continuous probability distribution assumed known to the marketing company.
Let $F_i(x) = \Pr[v_i \le x]$ be the distribution function of $v_i$, and $f_i(x) = \frac{\mathrm{d}}{\mathrm{d}x}F_i(x)$ be the corresponding density function.
The domain of both functions is $[0,1]$ as we assume both prices and valuations are in $[0,1]$.
As in the classical LT model, each node $u_i$ has an influence threshold
	$\theta_i$ chosen uniformly at random from
	$[0,1]$. Each edge $(u_i,u_j)\in E$ has an influence
	weight $w_{i,j} \in [0,1]$, such that for each node $u_j$, $\sum_{u_i\in\Nin(u_j)} w_{i,j} \le 1$.
If $(u_i,u_j)\not\in E$, define $w_{i,j} = 0$.
Following~\cite{domingos01,richardson02}, we assume that there
	is a constant acquisition cost $c_a\in [0,1)$ for marketing to each seed
	(e.g., rebates, or costs of mailing ads
	and coupons).

\begin{figure}
	\centering
	\includegraphics[width=0.6\textwidth]{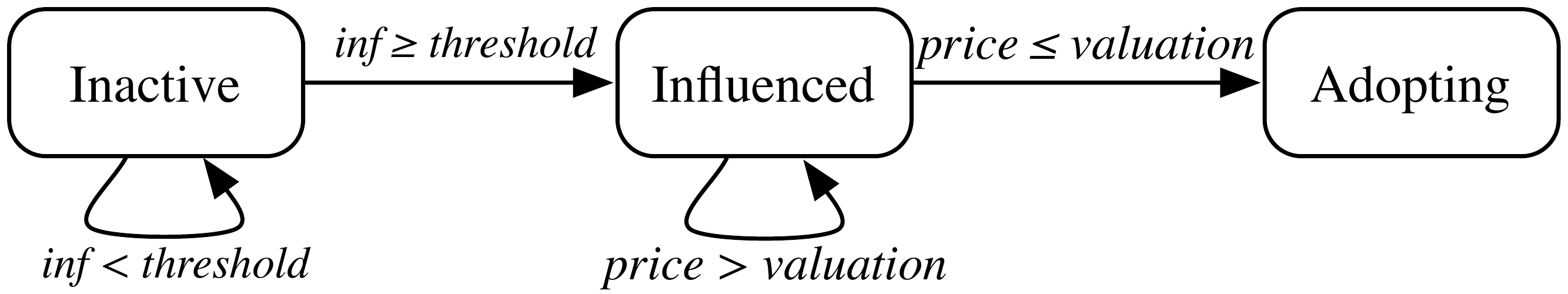}
	\caption{Node states in the LT-V model.}
	\label{fig:states}
\end{figure}

\paragraph{Diffusion Dynamics.}
Fig.~\ref{fig:states} presents a state diagram for the LT-V model.
At any time step, nodes are in one of the three states:
	\emph{inactive, influenced}, and {\em adopting}.
A diffusion under the LT-V model proceeds in discrete time steps.
Initially, all nodes are inactive.
At time $0$, a seed set $S$ is targeted and becomes influenced. Next, 
every user $u_i$ in the network is offered a price $p_i$ by the system. 
Let $\p = (p_1,\dotsc,p_{|V|})\in [0,1]^{|V|}$ denote a vector of quoted prices, which remains constant throughout the diffusion.
For any $u_i\in S$, it gets one chance to adopt (enters {\em adopting} state) at step $0$
	if  $p_i \le v_i$; otherwise it stays \emph{influenced}.

At any time $t\ge 1$, an inactive node $u_j$ becomes
	influenced if the total influence from its {\sl adopting
	in-neighbors} reaches its threshold, i.e.,
	$\sum_{u_i \in \Nin(u_j) \& \, u_i\, \text{adopting}} w_{i,j} \ge \theta_j.$
Then, $u_j$ will transition to \emph{adopting} at $t$ if $p_j\le v_j$,
	and will stay \emph{influenced} otherwise.
The model is progressive, meaning that all adopting nodes remain
	as adopters and no influenced node will ever become inactive. 
The diffusion ends if no more nodes can change states.

Following~\cite{kalish85}, we assume that only {\em adopting}
	nodes propagate influence, as adopters can release
	experience-related product features (e.g., durability, usability),
	making their recommendations more effective in removing
	doubts of inactive users.
This distinguishes our model from LT-C \cite{smriti12}, and in fact, the extensions to the LT model employed in LT-C and in LT-V are orthogonal and address different aspects in propagations of influence and adoption.

Formally, we define $\pi \colon 2^V \times [0,1]^{|V|} \to \reals$ to be the
	{\em profit function} such that $\pi(S,\p)$ is the expected
	(total) profit earned by the end of a diffusion process under the
	LT-V model, with $S$ as the seed set and $\p$ as the vector of prices.
The problem studied in the paper is as follows. 

\begin{definition}[Profit maximization (\promax)]\label{def:promax}
Given an instance of the LT-V model consisting of a graph $G = (V, E)$ with edge weights, 
find the optimal pair of a seed set $S$ and a price vector $\p$ that maximizes the 
expected profit $\pi(S, \p)$. 
\end{definition} 

\paragraph{The Virtual Mechanism and Its Truthfulness Guarantee.}
Recall that users are assumed to be price-takers making
	adoption decisions just by comparing the quoted price
	to their valuation.
Thus, it is natural to ask: \textsl{would an influenced user be better off
	by acting strategically, i.e., by {\em not} {deciding solely by}  
	comparing her true valuation to the price?}
In other words, for any pricing strategy used by a company,
	is it robust against strategic behaviors of users?

In fact, the price-taking procedure in LT-V can be structured as
	a {\em virtual} mechanism that we show is {\em truthful},
	and hence the dominant strategy for all users is to use true valuation.
It is worth emphasizing that the mechanism is {\em virtual} since in our model, the
	company needs {\em not} to run it and users will {\em not} 
	be asked to declare their valuation.

\begin{definition}[The Virtual Mechanism]\label{def:mech}
An influenced user $u_i$ declares some valuation to the company; then $u_i$ is sold the product at price $p_i$ if $p_i$ is no greater than the declared valuation, and not sold otherwise.
\end{definition}

\begin{theorem}[Truthfulness of the Virtual Mechanism]
The mechanism defined in Definition~\ref{def:mech} is truthful.
That is, the utility any user $u_i$ gets by declaring any number $\hat{v}_i \neq v_i$ is no greater than that she gets by declaring $v_i$ truthfully.
\end{theorem}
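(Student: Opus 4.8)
The plan is to reduce the claim to the standard argument that a \emph{posted-price} (take-it-or-leave-it) mechanism is dominant-strategy incentive compatible. Fix an arbitrary user $u_i$ and condition on everything outside $u_i$'s control: all thresholds, the valuations and declarations of the other users, and the realized run of the diffusion as it reaches $u_i$'s in-neighbors. Two structural features of the LT-V model make this conditioning legitimate. First, the price vector $\p$ is committed before the diffusion starts and stays constant, so the price $p_i$ quoted to $u_i$ does not depend on the number $\hat{v}_i$ that $u_i$ declares. Second, whether and when $u_i$ becomes influenced is determined solely by the states of its in-neighbors (and by $\theta_i$), never by $u_i$'s own declaration; hence if $u_i$ never becomes influenced its utility is $0$ regardless of what it would have declared, and it suffices to analyze a run in which $u_i$ is influenced (as a seed at step $0$, or later) and offered the fixed price $p_i$. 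In such a run, $u_i$'s utility is $v_i - p_i$ if it is sold the product (which happens iff $p_i \le \hat{v}_i$) and $0$ otherwise.

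The core of the argument is then a two-case comparison against this utility. If $p_i \le v_i$, declaring $v_i$ truthfully sells the product and yields utility $v_i - p_i \ge 0$; any other declaration $\hat{v}_i$ yields $v_i - p_i$ when $\hat{v}_i \ge p_i$ and $0$ when $\hat{v}_i < p_i$, both of which are $\le v_i - p_i$. If $p_i > v_i$, declaring $v_i$ truthfully does not sell the product and yields utility $0$; any other declaration yields $v_i - p_i < 0$ when $\hat{v}_i \ge p_i$ (an unprofitable forced purchase) and $0$ when $\hat{v}_i < p_i$, both of which are $\le 0$. In every case truthful declaration is a best response, which is exactly the statement of the theorem, and the reasoning is identical for seeds and non-seeds.

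The one point that deserves careful writing — the main obstacle, such as it is — is justifying that a strategic declaration cannot benefit $u_i$ \emph{indirectly}. A priori, by declaring $\hat{v}_i \ne v_i$ and thereby changing whether $u_i$ adopts, $u_i$ alters the influence it propagates and hence the downstream diffusion, so one must rule out any feedback to $u_i$'s own payoff. Here I would spell out that (i) $u_i$'s utility is defined purely through its own transaction (its own price and valuation), so later adoptions by other nodes do not enter it; (ii) prices are fixed ex ante, so they are not revised in response to the altered diffusion; and (iii) the model is progressive and grants $u_i$ a single, irrevocable adoption opportunity at the step it becomes influenced, so $u_i$'s state — and thus its utility — is unaffected by anything that happens afterward. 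Together these facts let us treat each user's decision in isolation, after which the posted-price comparison above finishes the proof.
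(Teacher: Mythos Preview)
Your proof is correct and follows essentially the same approach as the paper: a case split on whether $p_i \le v_i$ or $p_i > v_i$, showing in each case that any misreport $\hat{v}_i$ yields utility no greater than truthful reporting. Your write-up is in fact more careful than the paper's, since you explicitly justify why downstream effects of $u_i$'s declaration cannot feed back into $u_i$'s own utility---the paper's proof tacitly assumes this isolation and jumps straight to the posted-price case analysis.
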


\begin{proof}
Consider the case that the true valuation $v_i < p_i$. Then, if reporting $v_i$ truthfully, $u_i$ will not adopt $p_i$, and hence her utility is $0$.
Suppose that $u_i$ reports lower ($\hat{v}_i < v_i$); she still would not get the product and the utility is still $0$.
Suppose otherwise ($u_i$ reports higher: $\hat{v}_i > v_i$).
Then, if $\hat{v}_i < p_i$, the situation is the same, in which $u_i$ gets zero utility.
If $\hat{v}_i \ge p_i$, then $u_i$ ends up paying $p_i$ to adopt and having a negative utility, since $v_i - p_i < 0$.

Then, consider the case that $v_i \ge p_i$, in which if $u_i$ reports truthfully, she will adopt the product by paying $p_i$ and enjoy a non-negative utility $v_i - p_i$.
Suppose that $u_i$ reports higher ($\hat{v}_i > v_i$), then she still gets to pay $p_i$ and has utility $v_i - p_i$.
Suppose otherwise ($u_i$ reports lower: $\hat{v}_i < v_i$).
Then, if $\hat{v}_i$ is still no less than $p_i$, she still pays $p_i$ and has utility $v_i - p_i$, while if $\hat{v}_i$ happens to be lower than $p_i$, she will not buy it and has zero utility.
And this completes the proof.
\end{proof}

\subsection{A Restricted Special Case of \promax{} under LT-V}\label{sec:basic}

To better understand the properties of the LT-V model and the
	hardness of \promax{}, we first study a special case
	of the problem.
We assume the valuation distributions degenerate to an identical single-point, i.e., $\forall u_i\in V$, $v_i = p$ for some $p \in (0,1]$ with probability $1$.
As mentioned in Sec.~\ref{sec:intro}, this is usually not the case; 
	the degeneration assumption here is of theoretical interest only.

For simplicity, we also assume that for every $u_i \in S$, the quoted price $p_i = 0$\footnote{Strictly speaking, for the sake of maximizing expected total profit, the seller should also charge price $p$ to all seeds, since it is assumed that all users have valuation $p$.  In this case, the expected profit function becomes $\hat{\pi}(S) = p\cdot h_L(S) - c_a |S|$, and the non-monotonicity still holds in general. To see this, consider a social network graph $G=(V,E)$ where $|V|=100$, and also let $p=0.5$, $c_a=0.1$. Suppose that there is a single node $v$ that has expected influence spread of $90$, which alone would yield a profit of $44.9$, while if $S=V$, the expected profit would be $40$, which is less than the case when $S=\{v\}$. This example illustrates that the profit function is still non-monotone when all users have the same valuation and are offered the same price.}.   
Since valuation is the maximum money one is willing
	to pay for the product, in this case, the optimal pricing
	strategy is to set $p_j = p$, $\forall u_j\in V\setminus S$. 
The situation amounts to restricting the marketing strategy to a binary one: free 
sample ($p_i = 0$) for seeds and full price for non-seeds ($p_j = p$).
Given this pricing strategy, once a node is {\em influenced}, it transitions to {\em adopting} with probability $1$.
Thus, \promax{} boils down to a problem to determine a seed
	set $S$, and the profit function $\pi(S,\p)$  reduces  to a set function $\pihat(S)$, since 
$\p$ is uniquely determined given $S$:
\begin{align}
\pihat(S) &= p \cdot (h_L(S) - |S|) -  c_a\, |S| \nonumber \\
	   &= p \cdot h_L(S) - (p + c_a)\, |S| ,
\end{align}
where $h_L(S)$ is the expected number of adopting nodes
	under the LT-V model by seeding $S$.

In general, the degenerated profit function $\pihat$ is {\em non-monotone}. 
To see this, let $u$ be any seed that provides a positive profit. Now, clearly 
$\pihat(\emptyset) = 0 < \pihat(\{u\})$ but $\pihat(V) \leq 0 < \pihat(\{u\})$, 
as giving free samples to the whole network will result in a loss of $c_a\,|V|$ on account of 
seeding expenses. 
Since $\pihat$ is non-monotone, unlike \infmax, it is natural to not use a budget $k$ for the number of seeds, 
but instead ask for a seed set of any size that results in the maximum expected profit. 
In other words, the number of seeds to be chosen,
	$k$, is not preset, but is rather determined by a solution.
This restricted case of \promax{} is to find $S=\argmax_{T\subset V}\pihat(T)$, which we show is \NPhard{}.

\begin{theorem}\label{thm:basicnp}
The Restricted \promax{} problem (RPM) is \NPhard{} for the LT-V model.
\end{theorem}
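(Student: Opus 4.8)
\emph{Proof plan.} The plan is to give a polynomial‑time reduction from the (decision version of the) \textsc{Vertex Cover} problem: given an undirected graph $H=(V_H,E_H)$ with $n=|V_H|$ and an integer $k$, decide whether $H$ has a vertex cover of size $\le k$. Instances with $n\le 2$ are trivial, so assume $n\ge 3$. An algorithm for RPM will be used as an oracle: from an optimal solution of the RPM instance we build, we will simply read off whether $\mathrm{OPT}_{VC}\le k$.

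\emph{Base construction.} First I would recall the classical LT hardness gadget of Kempe et al.~\cite{kempe03}: take node set $\{a_v : v\in V_H\}$, replace each undirected edge $\{u,v\}$ of $H$ by the two arcs $(a_u,a_v),(a_v,a_u)$, and give arc $(a_u,a_v)$ weight $1/\deg_H(v)$ so the incoming weights of every node sum to exactly $1$. In the restricted LT‑V setting (all valuations equal to the single price $p$, seeds priced at $0$, non‑seeds at $p$), the influenced and adopting states coincide, so $h_L$ is simply the LT spread. The key fact to re‑derive for this base graph is: a seed set $S\subseteq\{a_v\}$ has $h_L(S)=n$ iff $\{v:a_v\in S\}$ is a vertex cover of $H$ — if it is, every non‑seed has all of its in‑neighbours seeded and so activates a.s.; if it is not, an uncovered edge $\{u^\ast,v^\ast\}$ gives, with probability $\tfrac{1}{\deg_H(u^\ast)}\tfrac{1}{\deg_H(v^\ast)}\ge 1/n^2$, thresholds so high that $a_{u^\ast}$ can activate only after $a_{v^\ast}$ does and vice versa, so neither ever does; hence a non‑cover has $h_L(S)\le n-2/n^2$.

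\emph{Amplification for the missing budget.} Since \promax{} is unbudgeted, plugging the base gadget directly into RPM fails: a tiny, non‑covering seed set can beat a large cover in profit, because the $\Theta(1/n^2)$ spread gap is dwarfed by the $\Theta(n)$ one saves in seeding cost. So I would amplify it: add $M$ ``reward'' sink nodes $R_1,\dots,R_M$, with $M$ a large polynomial in $n$ (say $M=n^5$), each $R_j$ receiving weight $1/n$ from every $a_v$. Then $\Pr[R_j\ \text{active}\mid S]=\E[\#\text{active }a_v\mid S]/n$; so a covering $S\subseteq\{a_v\}$ yields $h_L(S)=n+M$, whereas a non‑covering one yields $h_L(S)\le (1+M/n)(n-2/n^2)=n+M-\Theta(M/n^3)=n+M-\Theta(n^2)$. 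Now pick constants $0<c_a<p\le 1$ (e.g.\ $p=1$, $c_a=\tfrac12$) and all valuations $=p$, so that $\pihat(S)=p\,h_L(S)-(p+c_a)|S|$. One then verifies three things: (i) seeding any $R_j$ raises $h_L$ by at most $1$ while costing $p+c_a$, so it only hurts — w.l.o.g.\ the optimum uses only $a$‑nodes; (ii) among covers, $\pihat(S)=p(n+M)-(p+c_a)|S|$ is maximised by a \emph{minimum} cover; (iii) a non‑cover has $\pihat(S)\le p(n+M)-\Theta(n^2)$, which, since $M\gg n$ while $(p+c_a)\,\mathrm{OPT}_{VC}=O(n)$, is strictly smaller than the profit $p(n+M)-(p+c_a)\,\mathrm{OPT}_{VC}$ of a minimum cover. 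Hence an optimal RPM solution is exactly a minimum vertex cover of $H$, and $H$ has a cover of size $\le k$ iff the optimal RPM profit is $\ge p(n+M)-(p+c_a)k$. As the constructed instance has $n+M=\mathrm{poly}(n)$ nodes and arcs and is built in polynomial time, this establishes that RPM is \NPhard{}.

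\emph{Main obstacle.} The delicate part is the amplification: I need the gap between ``$S$ covers every edge'' and ``$S$ misses one'' to be a large polynomial in $n$, big enough to dominate the $O(n)$ seeding‑cost advantage a smaller, non‑covering seed set could otherwise exploit, while the gadget must keep the activation dynamics transparent and must not be ``harvestable'' by seeding its own nodes. Getting the quantitative bookkeeping right — pinning down the $1/n^2$ threshold‑event probability, then choosing $M$, $p$, $c_a$ so that all three inequalities above hold simultaneously — is where the care goes; the rest is routine.
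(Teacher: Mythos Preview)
Your reduction is correct, but it takes a genuinely different route from the paper's. The paper (following Kempe et al.'s original LT hardness argument) simply \emph{fixes all thresholds} to $\theta_i=1$, making the dynamics deterministic: a non-seed activates iff all its in-neighbours are already active, so $h_L(S)=n$ exactly when $S$ is a vertex cover. With $p=1$ and $c_a=0$ one gets $\pihat(S)=h_L(S)-|S|$, and a short case analysis shows the maximiser is precisely a minimum vertex cover---no amplification gadget is needed. You, by contrast, work with the uniformly random thresholds that the LT-V model actually specifies; this yields only a $\Theta(1/n^2)$ expected-spread gap between covers and non-covers, which you then blow up via the $M=n^5$ sink nodes so that it dominates the $O(n)$ seeding-cost term in the unbudgeted objective. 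The paper's argument is far shorter and avoids all the quantitative bookkeeping you flag as the ``main obstacle,'' at the cost of silently stepping outside the random-threshold model as defined; your argument is heavier but is faithful to the model statement and would be needed if one insisted that thresholds be drawn from $\U(0,1)$.
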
 
 
\begin{proof}
Given an instance of the \NPhard{} {\sc Minimum Vertex Cover} (MVC) problem, we can construct an instance of the \promax{} problem, such that an optimal solution to the \promax{} problem gives an optimal solution to the MVC problem.
Consider an instance of MVC defined by an undirected $n$-node graph $G=(V,E)$; we want to find a set $S$ such that $|S|=k$ and $k$ is the smallest number such that $G$ has a vertex cover (VC) of size $k$.

The corresponding instance of RPM is as follows: first, we direct all edges in $G$ in both directions to obtain a directed graph $G'=(V,E')$, where $E'$ is the set of all directed edges.
Then, for each $u_i\in V$, set $\theta_i = 1$; for each $(u_i, u_j)\in E$, define $w_{i,j} = 1/\mathrm{d}^{\mathit{in}}(u_j)$, where $\mathrm{d}^{\mathit{in}}(u_j)$ is the in-degree of $u_j$ in $G'$.
Lastly, set $p=1$ and $c_a = 0$, in which case $\pihat(S) = h_L(S) - |S|$. 
Now, we want show that a set $S\subseteq V$ is a minimum vertex cover (MVC) of $G$ if and only if $S=\argmax_{T\subseteq V}\pihat(T)$.

$(\Longrightarrow)$.
If $S$ is a MVC of $G$, then in \promax{} we choose $S$ as the seed set, so that $\pihat(S) = n-|S|$.
This is optimal, shown by contradiction.
Suppose otherwise, i.e., there exists some $T\subseteq V$, $T\neq S$, such that $\pihat(T) > \pihat(S)$.
For the case of $|T| \ge |S|$, we have $\pihat(T) = h_L(T) - |T| \le h_L(T) - |S|$.
Since $h_L(T) \le n$, $\pihat(T) \le h_L(T) - |S| \le n-|S| = \pihat(S)$, which is a contradiction.
For the case of $|T| < |S|$, let $|S| - |T| = \ell$.
Thus, $\pihat(T) = h_L(T) - (|S|-\ell)$.
Since $T$ is not a VC, $h_L(S) = n$, and it is supposed that $\pihat(T) > \pihat(S)$, we have $h_L(T) = n-j$, for some $j \in [1,\ell)$.
Then, from the way in which influence weights and thresholds are set up, we know there are exactly $j$ nodes in $V\setminus T$ that are not activated.
Let $J$ be the set containing those $j$ nodes, and consider the set $T' = T\cup J$, for which we have $\pihat(T') = n$.
From the proof of Theorem 2.7 of \cite{kempe03}, $T'$ is a VC of $G$.
But since $|T'| = |T| + j < |S|$, $T'$ is a VC with a strictly smaller size than $S$, which gives a contradiction since $S$ is a MVC.

$(\Longleftarrow)$.
Suppose that $S=\argmax_{T\subseteq V}\pihat(T)$, but $S$ is not a VC of $G$ (we will consider MVC later).
This implies that there exists at least one edge $e\in E$ such that both endpoints of $e$, denoted by $u_i$ and $u_j$, are not in $S$.
From the way in which influence weights and thresholds are set up in $G'$, we know both $u_i$ and $u_j$ are not activated.
Thus, if we add either one of them, say $u_i$, into $S$, $h_L(S\cup \{u_i\})$ is at least $h_L(S) + 2$, and thus $\pihat(S\cup\{u_i\}) - \pihat(S) > 1$, which contradicts with the fact that $S$ optimizes $\pihat$.
Hence, $S$ must be a VC of $G$.
Now suppose that in addition $S$ is not a MVC.
Then, there must exist some $x\in S$ such that the node-set $S\setminus \{x\}$ is still a VC of $G$;
this means that $h_L(S\setminus\{x\}) = n$, too.
Thus, $\pihat(S\setminus \{x\}) = n - |S| + 1 > \pihat(S) = n-|S|$, which is a contradiction.
Hence, $S$ is indeed a MVC of $G$.

Now we have shown that an optimal solution to the restricted \promax{} problem is an optimal solution to the {\sc Minimum Vertex Cover} problem, and vice versa; this completes the proof.

\eat{We reduce the {\sc Minimum Vertex Cover} (MVC) problem to RPM.
Given an instance of MVC defined by an undirected graph $G = (V,E)$, we construct an instance of RPM by first directing all edges in both directions to obtain a directed graph $G'=(V,E')$.
Then, for each $u_i\in V$, set $\theta_i = 1$; for each $(u_i, u_j)\in E'$, define $w_{i,j} = 1/\mathrm{d}^{\mathit{in}}(u_j)$, where $\mathrm{d}^{\mathit{in}}(u_j)$ is the in-degree of $u_j$ in $G'$.
Lastly, set $p=1$ and $c_a = 0$. 
Then, we can show that a set $S$ of nodes in $G'$ is an optimal solution to RPM if and only if $S$ is a minimum vertex cover of $G$.
}

\end{proof}

Observe that both components of $\pihat$, $h_L(S)$ and $-|S|$, are submodular,
	which leads to the submodularity of $\pihat$ as it is a non-negative linear
	combination of two submodular functions. However, unlike for \infmax{}, 
	the function is non-monotone and we want to find a set $S$ of any size that 
	maximizes $\pihat(S)$, so the standard \greedy{} is not applicable here. 
In \cite{feige07}, Feige et al.\ gave a randomized local search ($2/5$-approximation) for maximizing general non-monotone submodular functions.
This is applicable to $\pihat$, but have time complexity $\mathrm{O}(|V|^3|E| / \epsilon)$, where $(1+\epsilon/|V|^2)$ is the per-step improvement factor in the search.
By contrast, the function $\pihat$ is the difference between a monotone submodular function and a linear function, we propose a greedy approach (Algorithm~\ref{alg:greedy2} \UG{}) with time complexity $\mathrm{O}(|V|^2 |E|)$ and a better approximation ratio, which is slightly lower than $1-1/e$.
\UG{} grows the seed set $S$ in a greedy fashion similar to \greedy{}, and terminates when no node can
	provide positive marginal gain w.r.t.\ $S$.

{
\begin{algorithm}[t!]
\caption{\UG{} ($G=(V,E)$, $\pihat$)}\label{alg:greedy2}
$S \gets \emptyset$\;
\While {true} {
	$u \gets \argmax_{u_i\in V\setminus S}\left[\pihat(S\cup\{u_i\}) - \pihat(S) \right]$\;
	\If{$\pihat(S\cup\{u\}) - \pihat(S) >0$} {
             $S \gets S \cup \{u\}$\; 
	}	
	\lElse{{\em break}\;}
}
Output $S$\;
\end{algorithm}
}

\begin{theorem}\label{thm:ug}
\eat{Given as input $G=(V,E)$, for \promax{} with objective function $\pi_L$, let $S_g\subseteq V$ be the seed set returned by Algorithm~\ref{alg:greedy2}, and $S^*\subseteq V$ be the optimal solution.
Then,} 
Given an instance of the restricted \promax{} problem under the LT-V model consisting of a graph $G = (V, E)$ with edge 
weights and objective function $\pihat$, let $S_g\subseteq V$ be the seed set returned by 
Algorithm~\ref{alg:greedy2}, and $S^*\subseteq V$ be the optimal solution.
Then,
\begin{align}
\pihat(S_g) \ge (1-1/e)\cdot\pihat(S^*) - \Theta(\max\{|S_g|,|S^*|\}) \label{eqn:basicBound}.
\end{align}
\end{theorem}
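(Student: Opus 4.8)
The plan is to exploit the decomposition $\pihat(S) = g(S) - (p+c_a)\,|S|$, where $g(S) := p\cdot h_L(S)$ is monotone and submodular with $g(\emptyset)=0$ (it is $p$ times the LT spread in the degenerate instance), and $\ell(S):=(p+c_a)\,|S|$ is modular; the claim will then be reduced to the classical greedy guarantee of~\cite{submodular} applied to $g$. The first thing to record is that for any $S$ and any $u\notin S$ we have $\pihat(S\cup\{u\})-\pihat(S) = \bigl[g(S\cup\{u\})-g(S)\bigr] - (p+c_a)$. Consequently the element selected by \UG{} in each round is exactly $\argmax_{u}\bigl[g(S\cup\{u\})-g(S)\bigr]$, so \UG{} follows the very same trajectory as the ordinary greedy run on $g$, and it stops precisely when the largest marginal gain of $g$ first drops to at most $p+c_a$.

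Next I would set $k := \max\{|S_g|,|S^*|\}$ and let $T_0=\emptyset\subseteq T_1\subseteq\cdots\subseteq T_k$ be the greedy-on-$g$ sequence continued for $k$ rounds (ignoring \UG{}'s stopping rule; since $g$ is monotone, every such extra step still has nonnegative marginal gain). By construction $S_g = T_{|S_g|}$ is a prefix of this sequence. The classical analysis gives $g(T_k)\ge \bigl(1-(1-1/k)^k\bigr)\max_{|S|\le k}g(S)\ge (1-1/e)\,g(S^*)$, where the last step uses $|S^*|\le k$. Because \UG{} halted after $|S_g|$ rounds, the greedy marginal gain of $g$ at round $|S_g|+1$ is at most $p+c_a$, and by submodularity of $g$ the greedy marginal gains are non-increasing along the path, so $g(T_k)-g(S_g) = \sum_{j=|S_g|}^{k-1}\bigl[g(T_{j+1})-g(T_j)\bigr] \le (k-|S_g|)(p+c_a)$. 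Combining the two displays, $g(S_g)\ge (1-1/e)\,g(S^*) - (k-|S_g|)(p+c_a)$.

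Finally I would translate back into $\pihat$. Since $g(S^*) = \pihat(S^*) + (p+c_a)|S^*|$ and $\pihat(S^*)\ge \pihat(\emptyset)=0$ by optimality of $S^*$, we obtain $\pihat(S_g) = g(S_g) - (p+c_a)|S_g| \ge (1-1/e)\pihat(S^*) + (1-1/e)(p+c_a)|S^*| - (p+c_a)k \ge (1-1/e)\pihat(S^*) - (p+c_a)k$, where the nonnegative term $(1-1/e)(p+c_a)|S^*|$ is simply dropped. As $p,c_a\in[0,1)$, the additive slack is at most $2k = 2\max\{|S_g|,|S^*|\}$, i.e.\ $\Theta(\max\{|S_g|,|S^*|\})$, which is the stated bound. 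The degenerate cases $S_g=\emptyset$ (where one checks $\pihat(S^*)=0$ as well, using that $\max_u g(\{u\})\le p+c_a$) and $S_g=V$ (where $k=|S_g|$ and $T_k=S_g$) follow from the same inequalities and can be dispatched in a line.

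The step I expect to require the most care is the back-and-forth between $g$ and $\pihat$: one must keep the inequality directions straight — discarding the nonnegative term $(1-1/e)(p+c_a)|S^*|$ rather than the term $(p+c_a)k$ that is being subtracted — and must justify that stopping early costs only $(k-|S_g|)(p+c_a)$, which hinges on monotonicity of the greedy marginals under submodularity of $g$. Once those two points are nailed down, the rest is a direct invocation of the Nemhauser--Wolsey--Fisher bound, so no genuinely hard estimate remains.
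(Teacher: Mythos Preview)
Your proof is correct and follows essentially the same approach as the paper: decompose $\pihat$ into the monotone submodular part $p\cdot h_L$ minus the linear cost $(p+c_a)|S|$, apply the Nemhauser--Wolsey--Fisher bound to the former, and translate back. The paper treats the cases $|S^*|\le|S_g|$ and $|S^*|>|S_g|$ separately (in the latter, extending \UG{} to $|S^*|$ rounds and using $\pihat(S_g)\ge\pihat(S'_g)$) rather than unifying them via $k=\max\{|S_g|,|S^*|\}$, and it leaves implicit your key observation that \UG{} follows the same trajectory as greedy on $h_L$; otherwise the arguments coincide.
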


\begin{proof} 
Case $(i)$.\ If $|S^*| \le |S_g|$, then since $h_L$ is monotone and submodular, $h_L(S_g)\ge (1-1/e)\cdot h_L(S^*)$.
Thus, by the definition of $\pihat$, we have 
\begin{align*}
\pihat(S_g)  &=	p \cdot h_L(S_g) - (p+c_a)\, |S_g|  \\
		   &\ge	p (1-1/e)\cdot h_L(S^*) - (p+c_a)\, |S_g|  \\
		   &= 	(1-1/e)\cdot \pihat(S^*) - (p+c_a)\,|S_g| + (1-1/e)(p+c_a)\,|S^*|  \\
		   &= 	(1-1/e)\cdot \pihat(S^*) - \Theta(S_g).
\end{align*}

Case $(ii)$.\ If $|S^*| > |S_g|$, consider a set $S'_g$ obtained by running \UG{} until $|S'_g| = |S^*|$.
Clearly, from case $(i)$, we have $\pihat(S'_g)\ge (1-1/e)\cdot\pihat(S^*) - \Theta(|S'_g|)$.
Due to the fact that $|S^*| = |S'_g| > |S_g|$, and $S_g$ is obtained by running \UG{} until no node can provide positive marginal profit, we have $\pihat(S_g)\ge \pihat(S'_g) \ge (1-1/e)\cdot\pihat(S^*) - \Theta(|S^*|)$.
Combining the above two cases gives Eq.~\eqref{eqn:basicBound}.
\end{proof}

Theorem~\ref{thm:ug} indicates that
	the gap between the \UG{} solution and a $(1-1/e)$-approximation
	grows linearly w.r.t.\ the cardinality of the seed set. 
Since this cardinality is typically much smaller than the expected spread,
	\UG{} can provides quality guarantees for restricted \promax{}
	with objective function $\pihat$.

\subsection{Properties of the LT-V Model in the General Case}\label{sec:property}

Theorem~\ref{thm:basicnp} shows that in a restricted setting
	where exact valuations are known and the optimal pricing strategy
	is trivial, \promax{} is still \NPhard{}.
Now we consider the general \promax{} described in Sec.~\ref{sec:modeldef},
	and show that for any fixed price vector, the
	general profit function maintains submodularity (w.r.t.\ the seed
	set) regardless of the specific forms of the valuation distributions.

Given a seed set $S$ and a price vector $\p$, let $ap(u_i | S,\p)$ denote 
	$u_i$'s {\em adoption probability}, defined as the probability that $u_i$
	adopts the product by the end of the diffusion started with seed set $S$ and 
price vector $\p$.
Similarly, let $ip(u_i |S, \p_{-i})$ denote $u_i$'s
	{\em probability of getting influenced} under the same 
initial conditions, where
	$\p_{-i}\in[0,1]^{|V|-1}$ is the vector of all prices excluding $p_i$.
Also, let $\pi^{(i)}(S,\p)$ be the expected profit earned from $u_i$.
By model definition, for any $u_i\in V\setminus S$, we have
	$ap(u_i | S,\p) = ip(u_i |S, \p_{-i}) \cdot (1-F_i(p_i))$ and
	$\pi^{(i)}(S,\p) = p_i \cdot ap(u_i | S,\p)$.  
If $u_i\in S$, $ip(u_i |S, \p_{-i}) =1$ and $\pi^{(i)}(S,\p) = p_i \cdot (1-F_i(p_i)) - c_a$.

By linearity of expectations, we have $\pi(S,\p) = \sum_{u_i\in V} \pi^{(i)}(S,\p)$.
Hence, to analyze the profit function, we just need to focus on
	the adoption probability, in which the factor $(1-F_i(p_i))$ does not depend on $S$,
	but $ip(u_i |S, \p_{-i})$ calls for careful analysis, which we will present
	in the proof of Theorem~\ref{thm:submod}.

\eat{
\textcolor{blue}{
If $u_i\in S$, then $ip(u_i | S, \p_{-i}) = 1$, and thus $ap(u_i | S,\p) = 1-F_i(p_i)$.
Due to user's random transition process from {\em influenced} to
	{\em adopting}, the set of {\em adopting seeds} is a random
	subset of $S$, denoted by $R$, whose true realization can be
	any $T\subseteq S$ with $\Pr[R = T] = \prod_{u_i\in T} (1-F_i(p_i)) \cdot \prod_{u_j\in S\setminus T} F_j(p_j).$
In other words, $R$'s realization is determined by the outcomes
	of a series of Poisson trials, in which each seed performs
	a Bernoulli trial with success probability being its adoption
	probability, i.e.,  $1-F_i(p_i)$.
Fixing a certain $T\subseteq S$ as the true realization of $R$, we say that
	it corresponds to a {\em possible world} $X_T$. 
In a particular $X_T$, for any $u_i\in V\setminus S$,
	let $ip^{X_T}(u_i | T, \p_{-i})$ be the probability that $u_i$
	gets influenced when $T$ is the set of adopting seeds.
Averaging over all possible worlds, we have
	$ip(u_i | S, \p_{-i}) = \sum_{T\subseteq S}\Pr[R=T] \cdot ip^{X_T}(u_i | T, \p_{-i})$.
}}

Let $\vals = (v_1,\dots,v_{|V|}) \in [0,1]^{|V|}$ be a vector of 
	user valuations, corresponding to random samples drawn from the 
	various user valuation distributions. 
We now have:
\begin{theorem}[Submodularity] \label{thm:submod}
Given an instance of the LT-V model, for any fixed vector $\p\in[0,1]^{|V|}$ of prices, 
the profit function $\pi(S,\p)$ is submodular w.r.t.\ $S$, for an arbitrary 
vector $\vals$ of valuation samples.
\end{theorem}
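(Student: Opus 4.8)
The plan is to adapt the live-edge (live-arc) technique that Kempe et al.~\cite{kempe03} used to prove submodularity of the classical LT spread function, modifying it to account for the adoption step of LT-V, and to establish submodularity \emph{pointwise} over the randomness so that it is preserved by taking expectations. Fix the price vector $\p$ and, for the moment, an arbitrary realization of the valuation vector $\vals$; write $A = A(\vals,\p) = \{u_i \in V : v_i \ge p_i\}$ for the (now deterministic) set of users who transition from \emph{influenced} to \emph{adopting} once reached. Recall the LT live-edge process: each node $u_j$ independently selects at most one incoming edge, choosing $(u_i,u_j)$ with probability $w_{i,j}$ and no edge with probability $1-\sum_i w_{i,j}$; conditioning on all these choices yields a random subgraph $X$ in which every node has in-degree at most one. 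From $X$ and $A$, form the directed graph $H = H(X,A)$ by deleting every live edge $(u_i,u_j)$ with $u_i \notin A$. The first step is to prove the LT-V analogue of Kempe et al.'s reachability characterization: conditioned on $X$ (and on $\vals$), the set of \emph{influenced} nodes generated by a seed set $S$ is exactly the set $\phi_{X,A}(S)$ of nodes reachable from $S$ in $H$ (a node being reachable from itself).

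I would prove this by induction on the diffusion time step, mirroring the argument that establishes the live-edge characterization of the LT model in~\cite{kempe03}, but tracking the extra bookkeeping introduced by LT-V: a non-seed $u_j$ whose chosen in-edge is $(u_i,u_j)$ becomes influenced iff $u_i$ is \emph{adopting}, and $u_i$ is adopting iff it is influenced and $u_i\in A$; a seed is always influenced but propagates only if it lies in $A$. Unwinding this recursion along the (unique) chain of chosen in-edges shows that $u_j$ is influenced iff it is reachable from $S$ along a live path all of whose non-terminal vertices lie in $A$, i.e. reachable from $S$ in $H$. Writing $R_s$ for the set of nodes reachable from $s$ in $H$, we then have $\phi_{X,A}(S)=\bigcup_{s\in S}R_s$, and the set of \emph{adopting} nodes is $A\cap\phi_{X,A}(S)=\bigcup_{s\in S}(A\cap R_s)$.

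With this characterization, the realized profit under the pair $(X,\vals)$ is
\begin{equation*}
\pi(S,\p\mid X,\vals)\;=\;\sum_{u_j\in A\cap\phi_{X,A}(S)} p_j\;-\;c_a\,|S|,
\end{equation*}
since exactly the influenced nodes in $A$ adopt (seeds included) and each pays its quoted price, while each seed incurs cost $c_a$. The first term equals $\sum_{u_j} p_j\cdot\mathbf{1}[\,u_j\in\bigcup_{s\in S}(A\cap R_s)\,]$, a weighted coverage function of $S$ with non-negative weights $p_j$, hence monotone and submodular; the second term $-c_a|S|$ is modular, hence submodular; their sum is submodular in $S$ (though, as already noted, not monotone). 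Finally, $\pi(S,\p)$ is obtained by averaging $\pi(S,\p\mid X,\vals)$ over the randomness in $X$ (and, if one does not condition on $\vals$, over $\vals$ as well); since a non-negative convex combination of submodular functions is submodular, $\pi(\cdot,\p)$ is submodular, and this holds for every fixed $\vals$.

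I expect the only genuine obstacle to be the first step: pinning down and verifying the correct live-edge/reachability characterization for LT-V. The distinction between being \emph{influenced} and being \emph{adopting} --- together with the fact that seeds may fail to adopt, and hence fail to propagate --- breaks the clean one-to-one correspondence that makes the classical LT argument short, so the inductive proof needs care around these boundary cases. Once $H(X,A)$ is identified, the remaining ingredients (submodularity of a weighted coverage function plus a modular term, and closure of submodularity under expectation) are entirely routine.
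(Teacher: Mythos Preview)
Your proposal is correct and follows essentially the same route as the paper. The paper phrases the construction as a \emph{node coloring} (black if $p_i\le v_i$, white otherwise) combined with the LT live-edge selection, and defines ``black-reachability'' (reachable from some black seed via an all-black path, the terminal node possibly white); this is exactly your graph $H(X,A)$ with $A$ the black nodes, so your reachability characterization and theirs coincide. The only cosmetic difference is that the paper proves submodularity of each indicator $I_X(u_i\mid S)$ separately and then sums, whereas you package the realized profit directly as a weighted coverage function minus the modular cost term; both arguments are equivalent and equally short once the live-edge equivalence is in hand.
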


The proof of submodularity of the influence spread function $h$
	in the classical LT model \cite{kempe03} relies on establishing 
an equivalence between the LT model and reachability
	in a family of random graphs generated as follows: for each node $u_i\in V$,
	select at most one of its incoming edges at random, such that
	$(u_j,u_i)$ is selected with probability $w_{j,i}$, and no edge is selected
	with probability $1-\sum_{u_j\in \Nin(u_i)} w_{j,i}$.
We will use a similar approach in the proof of Theorem~\ref{thm:submod}.

\begin{proof}[Proof of Theorem~\ref{thm:submod}]
By linearity of expectation as well as the above analysis on adoption probabilities, $\pi(S,\p) = \sum_{u_i \in V} \pi^{(i)}(S,\p) =  \sum_{u_i \in S} [p_i (1-F_i(p_i)) - c_a] + \sum_{u_i\not\in S} p_i (1-F_i(p_i))\cdot ip(u_i | S, \p_{-i})$.
Since the first sum is linear in $S$, it suffices to show that $ip(u_i | S, \p_{-i})$ is submodular in $S$, whenever $u_i \not \in S$.

To encode random events of the LT-V model using the possible world semantics, we do the following.
First, we run a {\em node coloring} process on $G$: for each node $u_i$, if $p_i\le v_i$, color it black; otherwise color it white.
Meanwhile, we run a {\em live-edge selection} process following the aforementioned protocol~\cite{kempe03}.
Note that the two processes are orthogonal and independent of each other.
Combining the results of both leads to a {\em colored live-edge} graph, which we call a {\em possible world} $X$.
Let $\X$ be the probability space in which each sample point specifies one such possible world $X$.

Next, we define the notion of ``black-reachability''.
In any possible world $X$, a node $u_i$ is {\em black-reachable} from a node set $S$ if and only if there exists a black node $s\in S$ such that $u_i$ is reachable from $s$ via a path consisting entirely of black nodes, except possibly for $u_i$ (even if $u_i$ is white, it is still considered black-reachable since here we are interested in the probability of being {\em influenced}, not {\em adopting}).
From the same argument in the proof of Claim 2.6 of \cite{kempe03}, on any black-white colored graph, the following two distributions over the sets of nodes are the same:
(1) the distribution over sets of {\em influenced} nodes obtained by running the LT-V process to completion starting from $S$;
(2) the distribution over sets of nodes that are {\em black-reachable} from $S$, under the live-edge selection protocol.

Let $I_X(u_i | S)$ be the indicator set function such that it is $1$ if $u_i$ is black-reachable from $S$, and $0$ otherwise.
Consider two sets $S$ and $T$ with $S\subseteq T\subseteq V$, and a node $x\in V\setminus T$.
Consider some $u_i$ that is black-reachable from $T\cup \{x\}$ but not from $T$.
This implies
(1) $u_i$ is not black-reachable from $S$ either (otherwise, $u_i$ would also be black-reachable from $T$, which is a contradiction);
(2) the source of the path that ``black-reaches'' $u_i$ must be $x$.
Hence, $u_i$ is black-reachable from $S\cup\{x\}$, but not from $S$, which implies $I_X(u_i | S\cup\{x\}) - I_X(u_i |S) = 1 \geq 1 = I_X(u_i | T\cup\{x\}) - I_X(u_i | T)$.
Thus, $I_X(u_i | S)$ is submodular.
Since $ip(u_i|S,\p_{-i}) = \sum_{X\in \X}\Pr[X]\cdot I_X(u_i | S)$ is a nonnegative linear combination of submodular functions, 
this completes the proof.
\end{proof}

We also remark that in general graphs, given any $S$ and $\p$, it is \SPhard\ to compute the exact
	value of $\pi(S,\p)$ for the LT-V model, just as in the case of computing the exact expected
	spread of influence for the LT model.
This can be shown using a proof similar to the one for Theorem 1 in~\cite{ChenYZ10}.

\section{Profit Maximization Algorithms}\label{sec:algo}
For \promax{}, since the expected profit is a function of both the seed set and the vector of prices, a \promax{} algorithm should determine both the seed set and an assignment of prices to nodes to optimize the expected profit.
Accordingly, it has two components: 
	(1).\ a seed selection procedure that determines $S$, and
	(2).\ a pricing strategy that determines $\p$.
Due to acquisition costs and the possible need for {\em seed-discounting}
	(details later), $\pi(S,\p)$ is still
	non-monotone in $S$ and is in the form of the difference between 
	a monotone submodular function and a linear function.
Hence, inspired by the restricted \promax{} studied in~\ref{sec:basic},
	we propose to use \UG{} for seed set selection.

We then propose three pricing strategies and integrate
	them with \UG{} to obtain three \promax{} algorithms.
The first two, \allomp{} and \ffs{}, are baselines with simple
	strategies that set prices of seeds without considering
	the network structure and influence spread, while the third one,
	\page{}, computes optimal discounts for candidate seeds based on their ``profit potential''. 
Intuitively, it ``rewards'' seeds with higher influence spread by giving them a deeper discount to boost their adoption probabilities, and in turn the adoption probabilities of nodes that may be influenced directly or indirectly by such seeds.

Notice that taking valuations into account when modeling the diffusion process of product adoption makes a difference for a marketing company.
A pricing strategy that does not consider valuations is limited: either it charges everyone full price (or at best gives full discount to the seeds), or it uses an ad-hoc discount policy which is necessarily suboptimal.
By contrast, PAGE makes full use of valuation information to determine the best discounts.
\subsection{Two Baseline Algorithms: \allomp{} and \ffs{}}

Recall that in our model, users in the social network are price-takers who
	myopically respond to the price offered to them.
Thus, given a distribution function $F_i$ of valuation $v_i$,
	the {\em optimal myopic price} (OMP)~\cite{Hartline08} can be
	calculated by:

\begin{align}\label{eqn:omp}
p^m_i = \argmax_{p\in[0,1]} \; p\cdot (1-F_i(p)) . 
\end{align}

Offering OMP to a {\em single} influenced node ensures that the
	expected profit earned {\sl solely from that node} is the maximum. 
This gives our first \promax{} algorithm, \allomp{},
	which offers OMP to all nodes regardless of whether a node
	is a seed or how influential it is.
First, for each $u_i\in V$, it calculates $p^m_i$ using
	Eq.~\eqref{eqn:omp}, and populates all OMPs to form
	the price vector $\p^m = (p_1^m, ..., p_{|V|}^m)$.
Then, treating $\p^m$ fixed, it essentially runs \UG{} (Algorithm~\ref{alg:greedy2})
	to select the seeds.
When the algorithm cannot find a node of which the marginal profit is positive, it stops. 

{
\begin{algorithm}[t!]
\caption{\allomp{} ($G=(V,E)$, $\pi$, $F_i (\forall u_i\in V)$)}\label{alg:omp}
$S \gets \emptyset$; $\p^m \gets \mathbf{0}$\;
\ForEach {$u_i \in V$} {
	$\p^m[i] \gets p^m_i = \argmax_{p\in[0,1]} \; p\cdot (1-F_i(p))$\;
}
\While {$true$} {
	$u \gets \argmax_{u_i\in V\setminus S} [\pi(S\cup\{u_i\},\p^m) - \pi(S,\p^m)]$\;
	\lIf{$\pi(S\cup\{u\},\p^m) - \pi(S,\p^m) > 0$}{
		$S\gets S\cup \{u\}$\;
	}\lElse{
		{\em break}\;
	}
}
Output $S,\p^m$\;
\end{algorithm}
}

Notice that Eq.~\eqref{eqn:omp} overlooks the network structure and ignores the profit potential of seeds.
This may lead to the sub-optimality of \allomp{} in general. 
	Fig~\ref{fig:quotient} illustrates this with an example.
\begin{figure}[h!t!p!]\label{fig:quotient}
	\centering
	\includegraphics[width=0.3\textwidth]{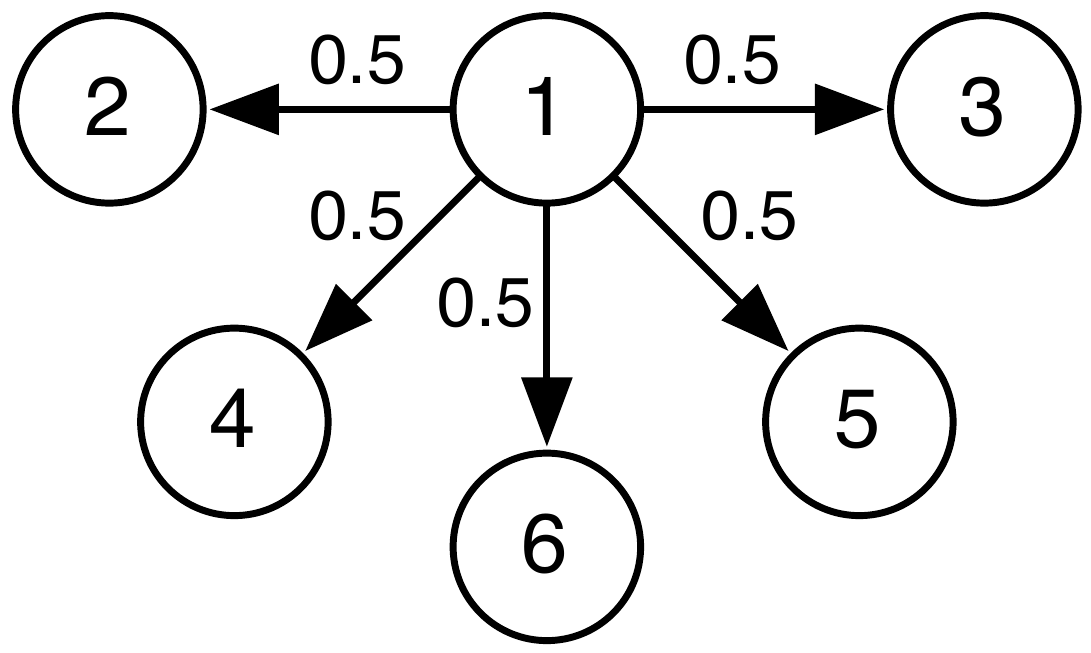}
	\caption{An example graph.}
	\label{fig:quotient}
\end{figure}
Suppose that all valuations are distributed uniformly in $[0,1]$ and the acquisition cost $c_a = 0.001$.
Hence, $\p^m = (1/2,\dotsc,1/2)$.
Consider seeding node $1$: it adopts w.p.\ $0.5$, giving a profit of $0.5+5*0.5^3-0.001 = 1.124$; it does not adopt w.p.\ $0.5$, resulting in a profit of $-0.001$.
Thus, the expected profit $\pi(\{1\},\p^m) = 0.5615$.
However, when $p_1 = 3/16$, $\pi(\{1\},\p^m_{-1}\oplus (3/16)) = 0.661$\footnote{We use $\p_{-i}\oplus x$ to denote a vector sharing all values with $\p$ except that the $i$-th coordinate is replaced by $x$, e.g., if $\p = (0.2, 0.3, 0.4)$, then $\p_{-1}\oplus 0.5 = (0.5, 0.3, 0.4)$.}.
This shows that for high-influence networks and low acquisition cost,
	the profit earned by running \allomp{} can be improved by
	{\em seed-discounting}, i.e., lowering prices for seeds so as
	to boost  their adoption probabilities and thus better leverage their
	influence over the network.
The intuition is that the profit loss over seeds (stemming from the discount) can potentially be
	compensated and even surpassed by the profit gain over non-seeds: more 
seeds may adopt as a result of the discount and the 
	probabilities of non-seeds getting influenced will go up as more seeds
	adopt.

Generally speaking, there exists a trade-off between the immediate
	(myopic) profit earned from seeds and the potentially more
	profit earned from non-seeds.
Favoring the latter, we propose our second algorithm \ffs{} (Free-For-Seeds) 
	which gives a full discount to seeds and charges non-seeds the OMP.
\ffs{} first calculates $\p^m = (p_1^m, ..., p_{|V|}^m)$ using Eq.~\eqref{eqn:omp}.
Then it runs \UG{}: in each iteration, it adds to $S$ the node which provides
	the largest marginal profit when a full discount (i.e., price $0$) is given.
For all seeds added, their prices remain $0$;
the algorithm ends when no node can provide positive marginal profit.

\begin{algorithm}[t!]
\caption{\ffs{} ($G=(V,E)$, $\pi$, $F_i (\forall u_i\in V)$)}\label{alg:ffs}
$S \gets \emptyset$; $\p^f \gets \mathbf{0}$\;
\ForEach {$u_i \in V$} {
	$\p^f[i] \gets p^m_i = \argmax_{p\in [0,1]} \; p\cdot (1-F_i(p))$\;
}
\While {$true$} {
	$u \gets \argmax_{u_i\in V\setminus S} [\pi(S\cup\{u_i\},\p^f_{-i}\oplus 0) - \pi(S,\p^f)]$\;
	\If{$\pi(S\cup\{u\},\p^f_{-u}\oplus 0) - \pi(S,\p^f) > 0$}{
		$S\gets S\cup \{u\}$; $\p^f \gets \p^f_{-u}\oplus 0$\;
	}\lElse{
		{\em break}\;
	}
}
Output $S,\p^f$\;
\end{algorithm}

Since \ffs{} has a completely opposite attitude towards
	seed-discounting compared to \allomp{},
	intuitively, it should be suitable for high-influence
	networks and low acquisition costs, but it may be overly aggressive
	for low-influence networks and high acquisition costs.
For example, in Fig~\ref{fig:quotient}, the \ffs{} profit by seeding node $1$ is $0.625$,
	better than the \allomp{} profit $0.5615$.
But if all influence weights are $0.01$ instead of $0.5$, and $c_a= 0.01$,
	\allomp{} gives a profit of $0.246$, while $\ffs$ gives only $0.0025$.

\subsection{The \page{} Algorithm}\label{sec:page}

Both \allomp{} and \ffs{} are easy for marketing companies to
	operate, but they are not balanced and are not robust against different
	input instances as illustrated above by examples.
To achieve more balance, we propose the \page{} (for Price-Aware GrEedy) algorithm (Algorithm~\ref{alg:pag}).
\page{} also employs \UG{} to select seeds. It initializes all seed prices to their OMP values (Step 3). 
In each round, it calculates the best price for each
	candidate seed such that its marginal profit (MP) w.r.t.\ the chosen
	$S$ and $\p$ is maximized (Step 7); then it picks the node with
	the largest maximum MP (Step 8). It stops when it cannot find a seed with a 
positive MP (Step 11). 
For all non-seed nodes, \page{} still charges OMP. We next explain the details of 
determining the best price for a candidate seed. 

Given a seed set $S$, consider an arbitrary candidate seed
	$u_i \in V\setminus S$, with its price $p_i$ to be determined.
The marginal profit ($MP$) that $u_i$ provides w.r.t.\ $S$ with
	$p_i$ is $MP(u_i) = \pi(S\cup\{u_i\},\p_{-i}\oplus p_i) - \pi(S,\p_{-i}\oplus p_i^m)$,
	where $\p_{-i}$ is fixed.
The key task in \page{} is to find $p_i$ such that $MP(u_i)$
	is maximized.
Since $\pi(S,\p_{-i}\oplus p_i^m)$ does not involve $u_i$ and $p_i$,
	 it suffices to find $p_i$ that maximizes $\pi(S\cup\{u_i\},\p_{-i}\oplus p_i)$.

Seeding $u_i$ at a certain price $p_i$ results in two possible worlds:
	world $X_1^{(i)}$ with $\Pr[X_1^{(i)}] = 1-F_i(p_i)$, in which $u_i$ adopts,
	and world $X_0^{(i)}$ with $\Pr[X_0^{(i)}] = F_i(p_i)$, in which $u_i$ does not adopt.
In world $X_1^{(i)}$, the profit earned from $u_i$ is $p_i-c_a$ and let the expected
	profit earned from other nodes be $Y_1$.
Similarly, in world $X_0^{(i)}$, the profit from $u_i$ is $-c_a$ and let the expected profit
	from other nodes be $Y_0$.
Notice that $Y_1$ depends on the influence of $u_i$ but $Y_0$ does not.
Putting it all together, the quantity of $\pi(S\cup\{u_i\},\p_{-i}\oplus p_i)$ can be expressed as
	a function of $p_i$ as follows:
\begin{align}\label{eqn:gp}
g_i(p_i) = (1-F_i(p_i))\cdot (p_i+Y_1) + F_i(p_i)\cdot Y_0  -c_a \,.
\end{align}
Similarly to the expected spread of influence in \infmax{}, the exact values of $Y_1$ and $Y_0$ cannot
	be computed in PTIME (due to \SPhard{ness}~\cite{ChenYZ10}), but
	sufficiently accurate estimations can be obtained by Monte Carlo (MC) simulations.

Finding $p_i^* = \argmax_{p_i\in[0,1]} g_i(p_i)$ now depends on the specific
	form of the distribution function $F_i$.
We consider two kinds of distributions: the {\em normal}
	distribution, for which $v_i\sim \N(\mu,\sigma^2),\, \forall u_i \in V$, and
	the {\em uniform} distribution, for which $v_i\sim \U(0,1),\, \forall u_i\in V$.
The choice of the normal distribution is supported by evidence from real-world data	
	from \texttt{Epinions.com} (see Sec.~\ref{sec:exp}), and also work in~\cite{jiangklb05}.
When sales data are not available, it is common to consider the uniform
	distribution with support $[0,1]$ to account for our complete lack
	of knowledge~\cite{klbBook,bloch08}.

\paragraph{The Normal Distribution Case.}
For normal distribution, assume that $v_i\sim \N(\mu,\sigma^2)$ for some $\mu$ and $\sigma$, then $\forall p_i\in [0,1]$,
\[
F_i(p_i) = \frac{1}{2} \left[1 + \erf\left(\frac{p_i-\mu}{\sqrt{2}\sigma}\right)\right],
\]
	where $\erf(\cdot)$ is the error function, defined as
\[
\erf(x) = \frac{2}{\sqrt{\pi}} \int_0^x e^{-t^2}\,\mathrm{d}t.
\]
Plugging $F_i(\cdot)$ back into Eq.~\eqref{eqn:gp}, one cannot obtain an analytical
	solution for $p_i^*$, as $\erf(x)$ has no closed-form expression.
Thus, we turn to numerical methods to approximately find $p_i^*$.
Specifically, we use the {\em golden section search algorithm},
	a technique that finds the extremum of a unimodal
	function by iteratively shrinking the interval inside which
	the extremum is known to exist~\cite{nabook94}.
In our case, the search algorithm starts with the interval $[0,1]$, and we set the {\em stopping criteria}
	to be that the size of the interval which contains $p_i$ is strictly smaller than $10^{-8}$.

\paragraph{The Uniform Distribution Case.}
The uniform distribution has easier calculations and analytical solutions.
If $v_i\sim \U(0,1)$, then $\forall p_i\in [0,1]$, $F_i(p_i) = p_i$, and
	plugging it back to Eq.~\eqref{eqn:gp} gives
\[
g_i(p_i) = -p_i^2 + (1-Y_1+Y_0)\cdot p_i + Y_1 - c_a.
\]
Hence, the optimal price
\[
p_i^* = \frac{(1+Y_1-Y_0)}{2}.
\]

For both normal and uniform distributions, if $p_i^* > 1$ or $p_i^* < 0$,
	it is normalized back to $1$ or $0$, respectively.  
Also note that the above solution framework applies to {\sl any probability distribution} that $v_i$ may follow, as long as an analytical or numerical solution can be found for $p_i^*$.

To conclude this section, 
	steps~\ref{line:a}-\ref{line:b} in Algorithm~\ref{alg:pag} (and also the \UG{} seed selection procedure in \allomp{} and \ffs{})
	can be accelerated by the CELF optimization~\cite{Leskovec07},
	or the more recent CELF++ \cite{www_poster}.
The adaptation is straightforward and the details can be found in~\cite{Leskovec07} and \cite{www_poster}.

{
\begin{algorithm}[t!]
\caption{\page{} ($G=(V,E)$, $\pi$, $F_i (\forall u_i\in V)$)}\label{alg:pag}
$S \gets \emptyset$; $\p \gets \mathbf{0}$\;
\ForEach {$u_i \in V$} {
	$\p[i] \gets p^m_i = \argmax_{p\in [0,1]} \; p\cdot (1-F_i(p))$\;
}
\While {$true$} {
       \ForEach{$u_i \in V\setminus S$} {  \label{line:a}
		Estimate the value of $Y_0$ and $Y_1$ by MC simulations\;
		$p_i^* \gets \argmax_{p_i\in[0,1]} g_i(p_i)$; normalize if needed\;
	}
	$u \gets \argmax_{u_i\in V\setminus S} g_i(p_i^*)$\; \label{line:b}
	\If{$\pi(S\cup\{u_i\},\p_{-i}\oplus p_i^*) - \pi(S,\p_{-i}\oplus p_i^m) > 0$}{
		$S\gets S\cup \{u_i\}$; $\p \gets \p_{-i}\oplus p_i^*$\;
	}\lElse{
		{\em break}\;
	}
}
Output $S,\p$\;
\end{algorithm}
}

\section{Empirical Evaluations}\label{sec:exp}
\begin{table}
\centering
\caption{Statistics of Network Data.}
	\begin{tabular}{|p{4cm} || c | c | c |}
		\hline
		\textbf{Dataset} & \textbf{Epinions} & \textbf{Flixster} & \textbf{NetHEPT}   \\ \hline \hline
		Number of nodes & $11$K & $7.6$K & $15$K   \\ \hline
		Number of edges & $119$K & $50$K & $62$K    \\ \hline
	    Average out-degree & $10.7$ & $6.5$ & $4.12$   \\ \hline
	    Maximum out-degree & $1208$ & $197$ & $64$  \\ \hline
	    \#Connected components & $4603$  & $761$ & $1781$   \\ \hline
		Largest component size & $5933$ & $2861$ & $6794$   \\ \hline
	\end{tabular}
	\label{table:dataset}
\end{table}

We conduct experiments on real-world network datasets to evaluate 
	our proposed baselines and the \page{} algorithm.
In all these algorithms, a key step is to compute the marginal profit of a candidate seed.
As mentioned in Sec.~\ref{sec:model}, computing the exact expected profit is intractable for the LT-V model.
Thus, we estimate the expected profit with Monte Carlo (MC) simulations.
Following \cite{kempe03}, we run 10,000 simulations for this purpose.
This is an expensive step and as for \infmax{}, it limits the size of networks on which we can run these simulations. 
For the same reason, the CELF optimization is used in all algorithms as a heuristic.
All implementations are in C++ and all experiments were run on a
	server with $2.50$GHz eight-core Intel Xeon E5420 CPU,
	$16$GB RAM, and Windows Server 2008 R2.

\subsection{Dataset Preparations}

\paragraph{Network Data.}
We use three network datasets whose statistics are summarized in
	Table~\ref{table:dataset}.
They include:
(a) Epinions~\cite{richardson02}, a who-trust-whom network extracted from
	review site \texttt{Epinions.com}: an edge $(u_i, u_j)$ is present
	if $u_j$ has expressed her trust in $u_i$'s reviews;
(b) Flixster\footnote{\url{http://www2.cs.sfu.ca/~sja25/personal/datasets/}. Ratings timestamped.}, a 
	friendship network from social movie site \texttt{Flixster.com}:
	if $u_i$ and $u_j$ are friends, we have edges in both directions;
(c) NetHEPT (standard for \infmax{}~\cite{kempe03, ChenWW10, ChenYZ10, simpath})\footnote{\url{http://research.microsoft.com/en-us/people/weic/projects.aspx}},
	a co-authorship network extracted from the High Energy Physics Theory
	section of \texttt{arXiv.org}: if $u_i$ and $u_j$ have co-authored papers,
	we have edges in both directions.
The raw data of Epinions and Flixster contain $76$K users, $509$K edges and $1$M users, $28$M edges, respectively.
We use the METIS graph partition software\footnote{http://glaros.dtc.umn.edu/gkhome/views/metis} to extract a
	subgraph for both networks, to ensure that MC simulations can finish in a reasonable amount of time.

\begin{figure}[h!t!]
\centering
        \subfigure[Weighted Distribution]{ 
                \includegraphics[width=0.45\textwidth]{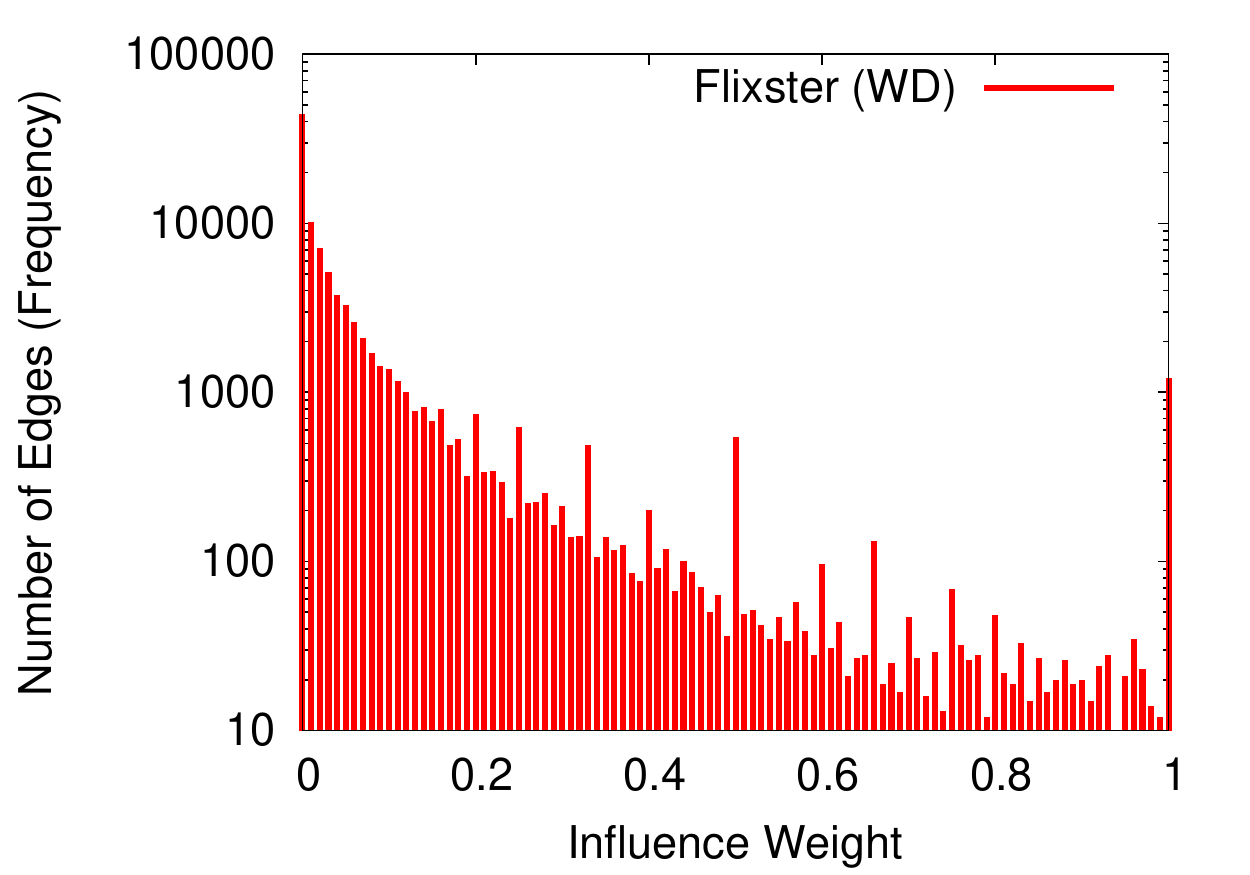}
                \label{fig:LT2dist}
	}
        \subfigure[Trivalency]{
                \includegraphics[width=0.45\textwidth]{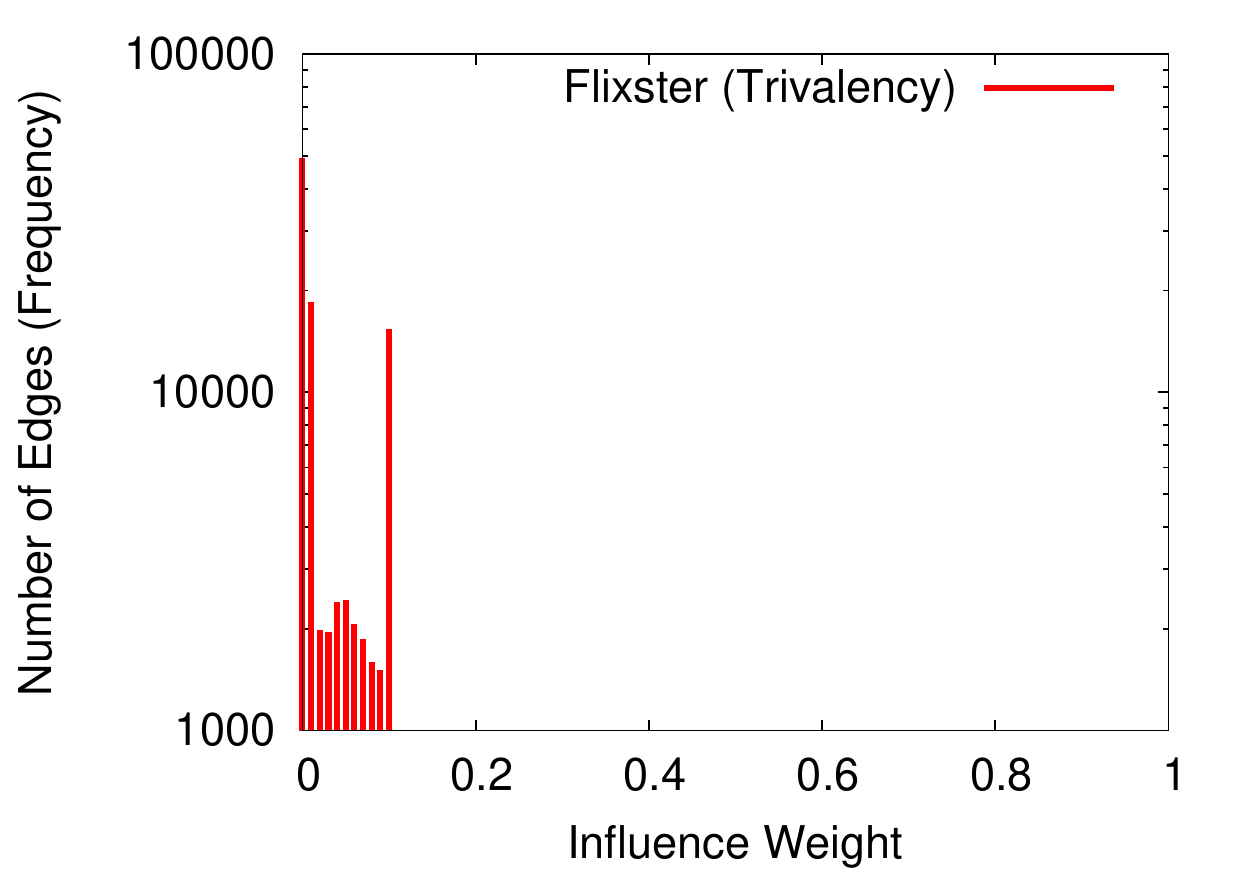}
                \label{fig:TVdist}
       }
        \caption{Distribution of influence weights in Flixster}
	\label{fig:flixdist}
\end{figure}

\paragraph{Influence Weights.}
We use two methods, Weighted Distribution (WD) and
	Trivalency (TV), to assign influence weights to edges.
For WD, $w_{i,j} = A_{i,j} / N_j$, where $A_{i,j}$ is the number of actions
	$u_i$ and $u_j$ both perform, and $N_j$ is a normalization factor, i.e., 
	the number of actions performed by $u_j$, 
	to ensure $\sum_{u_i\in \Nin(u_j)} w_{i,j} \le 1$.
In Flixster, $A_{i,j}$ is the number of movies $u_j$ rated after $u_i$;
	in NetHEPT, $A_{i,j}$ is the number of papers $u_i$ and $u_j$ co-authored;
	in Epinions, since no action data is available, we use $w_{i,j} = 1 / \mathrm{d}^{\mathit{in}}(u_j)$ as an approximation.
For TV, $w_{i,j}$ is selected uniformly at random from $\{0.001, 0.01, 0.1\}$, and is normalized to ensure $\sum_{u_i\in \Nin(u_j)} w_{i,j} \le 1$.
Fig.~\ref{fig:flixdist} illustrates the distribution of weights for Flixster; it shows that influence is higher in WD graphs than in TV graphs.

\paragraph{Valuation Distributions.}
As mentioned in Sec.~\ref{sec:intro}, valuations are difficult to
	obtain directly from users, and we have to estimate the distribution using
	historical sales data.
In an \texttt{Epinions.com} review, a user provides an integer rating from $1$ to $5$, and may optionally
	report the price she paid in US dollars (see, e.g., \url{http://tinyurl.com/773to53}).
If a review contains both price and rating, we can combine them to approximately estimate
	the valuation of that user, as in such systems, ratings are seen as people's utility for
	a good, and utility is the difference of valuation and price~\cite{klbBook}.

We observed that most products have only a limited number ($< 100$) of reviews, and thus a single product
	may not provide enough samples.
To circumvent this difficulty, we acquired all reviews for the popular Canon EOS 300D, 350D, and 400D
	cameras.
Given that these cameras followed a sequential release within a short time span (three years),
	we treated them as having similar monetary values to consumers.
After removing reviews without prices reported, we are left with $276$ samples.
Next, we transform prices and ratings to obtain estimated valuations as follows:
\[
\mathrm{valuation} = \mathrm{price} * (1 + \mathrm{rating} \, / \, 5).
\]

We then normalize the results into $[0,1]$ and fit the data to a normal
	distribution $\N(\mu,\sigma^2)$ with $\mu  = 0.53$ and $\sigma = 0.14$ estimated by maximum likelihood estimation (MLE).
Fig.~\ref{fig:eosPDF} plots the histogram of the normalized valuations;
Fig.~\ref{fig:eosKS} presents the CDFs of our empirical data and $\N(0.53, 0.14^2)$.
To test the goodness of fit, we compute the Kolmogorov-Smirnov (K-S) statistic~\cite{GonzalezSF77} of the two distributions, which is defined as the maximum difference between the two CDFs;
in our case, the K-S statistic is $0.1064$. 
As can be seen from Fig.~\ref{fig:eosKS}, $\N(0.53, 0.14^2)$ is indeed a good fit for the estimated
	valuations of the three Canon EOS cameras on \texttt{Epinions.com}.

Since there are no price data to be collected in Flixster and NetHEPT,
	we use $\N(0.53, 0.14^2)$ in the simulations for all datasets.
In addition, for completeness, we also test with the uniform distribution
	over $[0,1]$, i.e., $\U(0,1)$, as it is commonly assumed in the
	literature~\cite{klbBook,bloch08}.

\begin{figure}[h!t!]
\centering
       \subfigure[Histogram of valuations]{
                \includegraphics[width=0.45\textwidth]{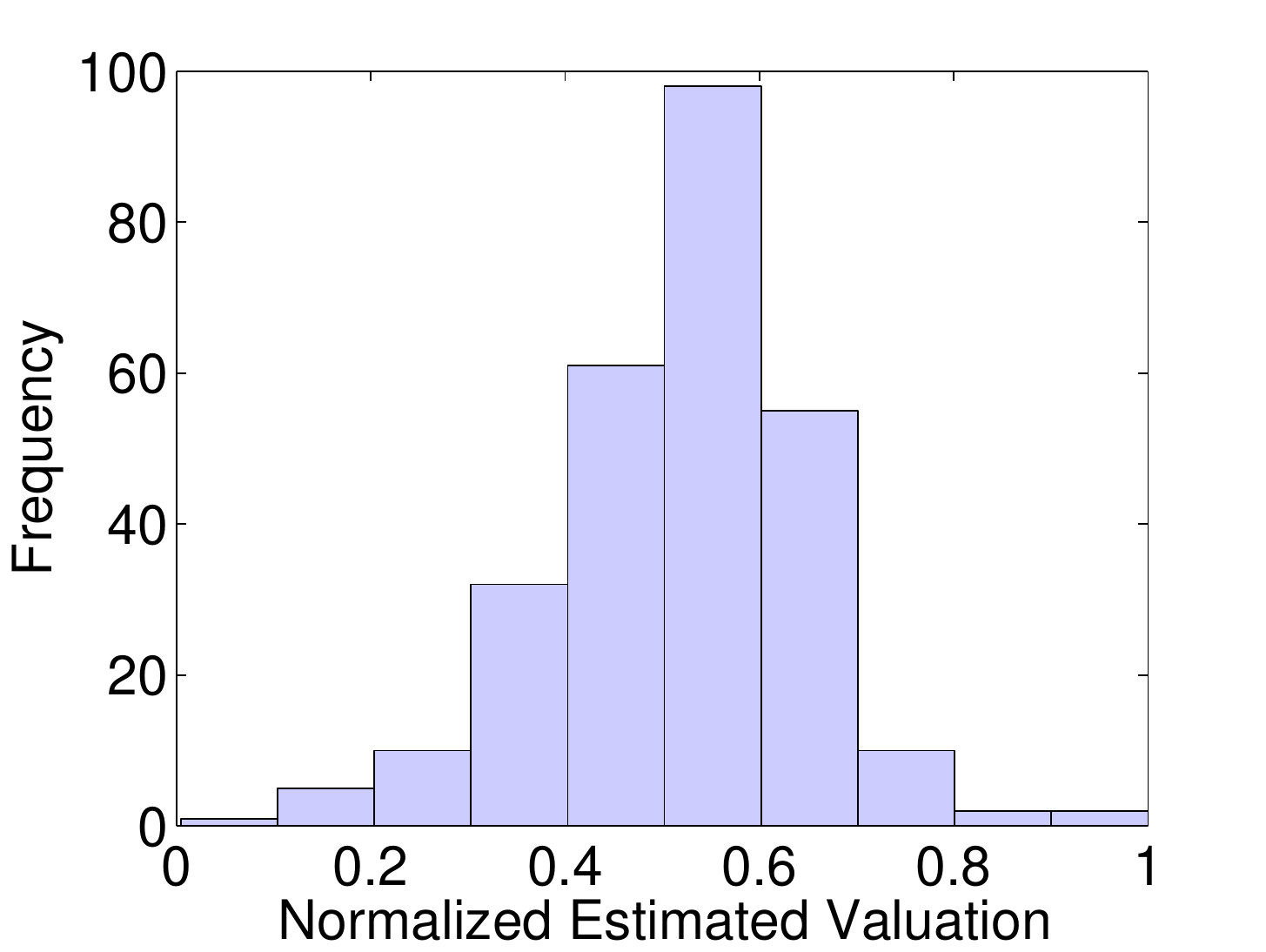}
                \label{fig:eosPDF}
        }
        \subfigure[Empirical \& normal CDFs]{
                \includegraphics[width=0.45\textwidth]{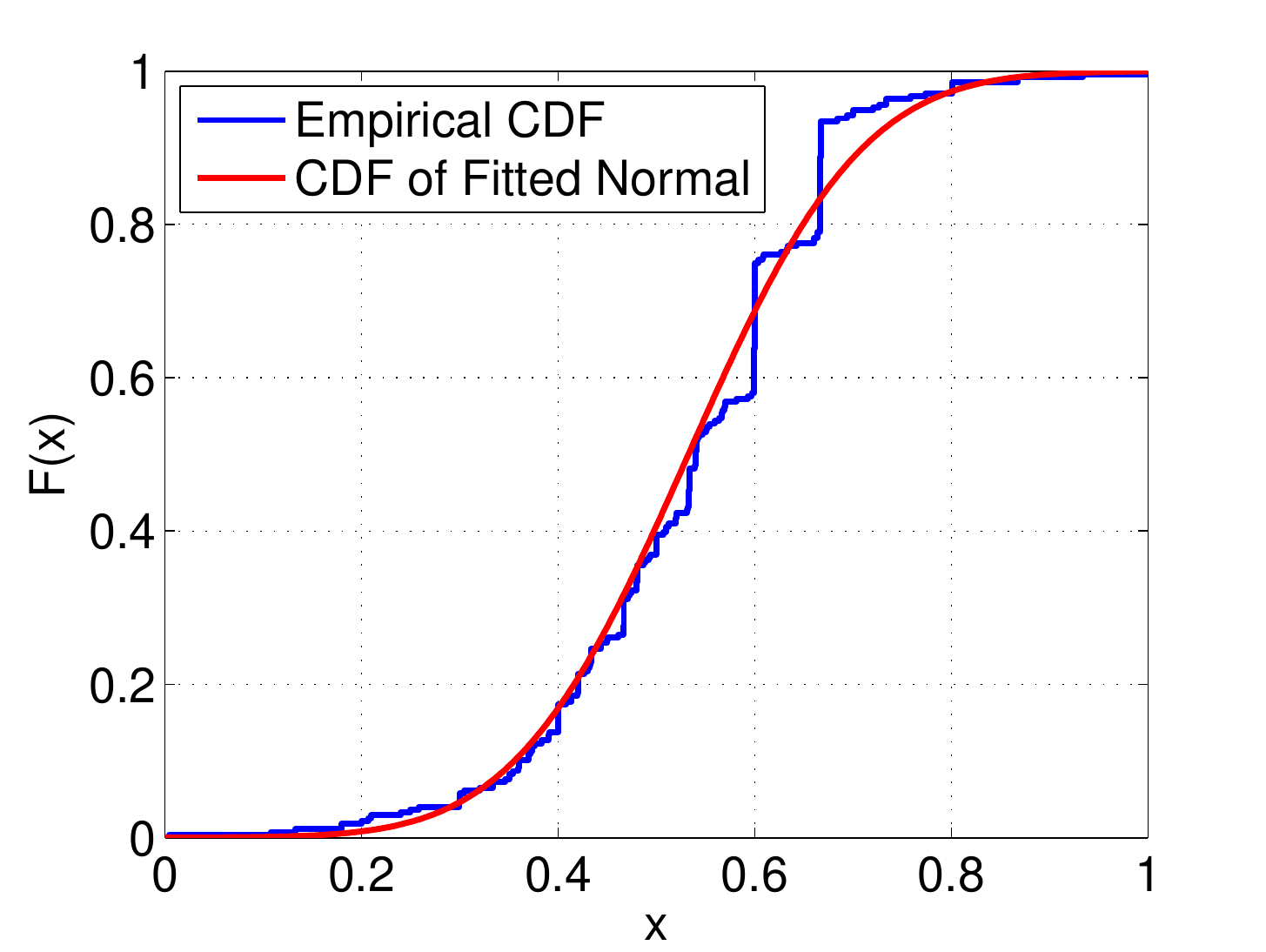}
                \label{fig:eosKS}
        }
        \caption{Statistics of Valuations (\texttt{Epinions.com})}
	\label{fig:eosdata}
\end{figure}

\subsection{Experimental Results}
We compare \page{}, \allomp{}, and \ffs{} in terms of the
	expected profit achieved, price assignments, and running time.
Although all algorithms employ \UG{} which does not terminate
	until the marginal profit starts decreasing, for
	uniformity, we report simulation results up to $100$ seeds.

\eat{
\begin{figure}[h!t!]
\centering
\includegraphics[width=0.5\textwidth]{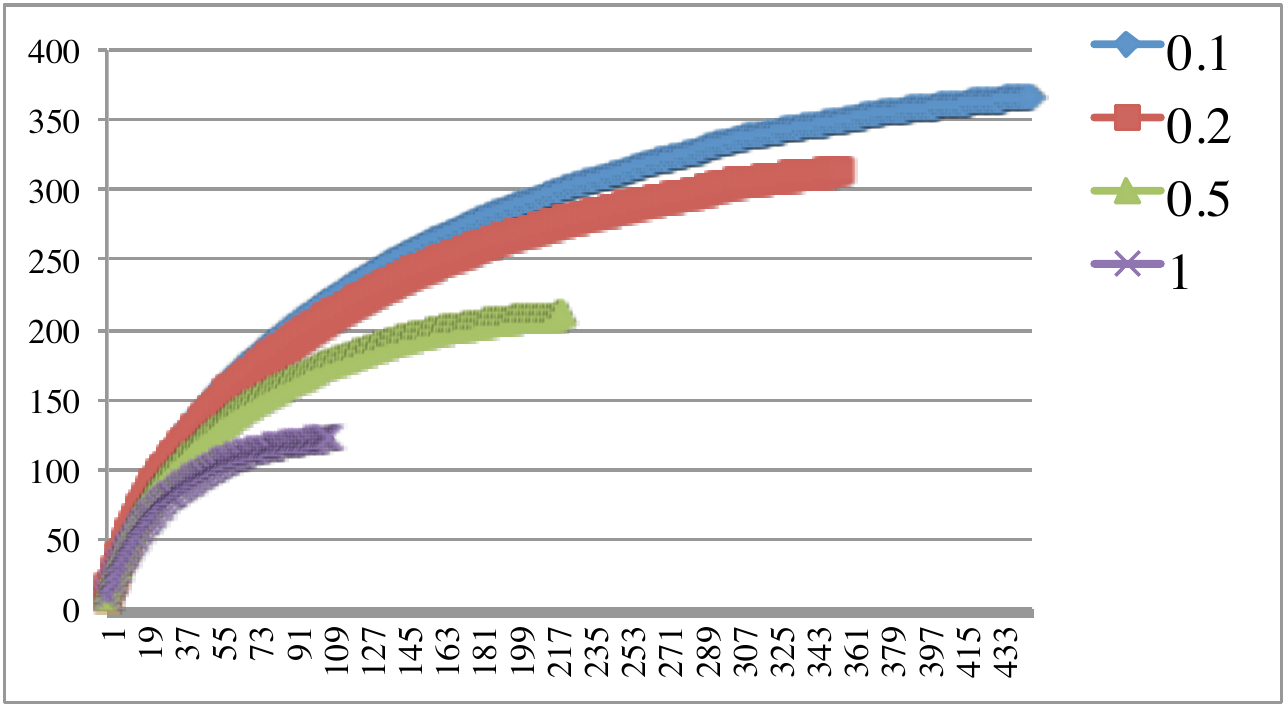}
\caption{\UG{} with various choices for $c_a$. X-axis: $|S|$; Y-axis: Expected profit achieved.}
\label{fig:hepBasic}
\end{figure}

\paragraph{\UG{}.}
To test the performance of \UG{} alone,
	we experiment with the function $\pihat$ for restricted \promax{}
	with $p=1$ and four acquisition costs: $c_a = 0.1, 0.2, 0.5$, and $1.0$
	(in reality, $c_a$ may not be as high; here we use these numbers just for exploration).
Fig.~\ref{fig:hepBasic} illustrates the expected values of $\hat{\pi}$ with all four costs
	on NetHEPT-TV.
In each case, the curve is cut off at the point where \UG{} stopped as $\hat{\pi}(S)$ started to decrease.
The larger $c_a$ is, the earlier \UG{} terminated, which is expected.
}

\begin{figure*}[h!t!]
\begin{tabular}{cc}
    \includegraphics[width=.45\textwidth]{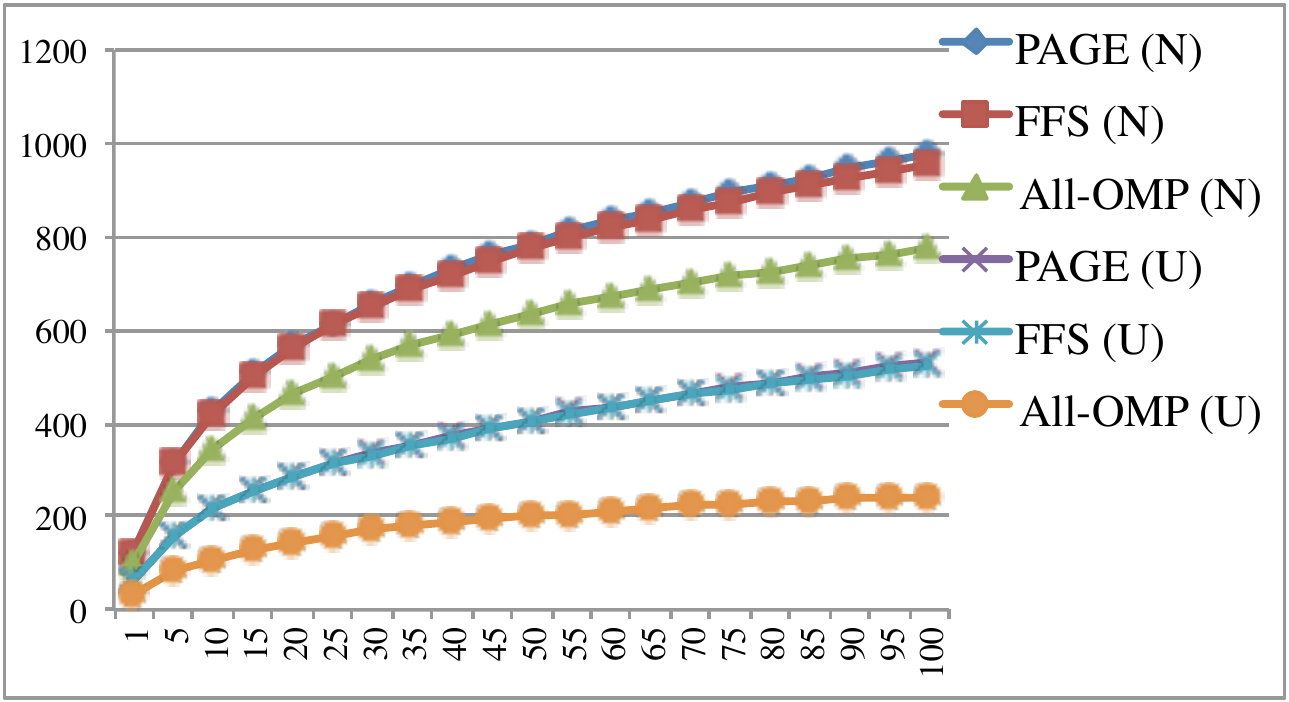}&
    \includegraphics[width=.45\textwidth]{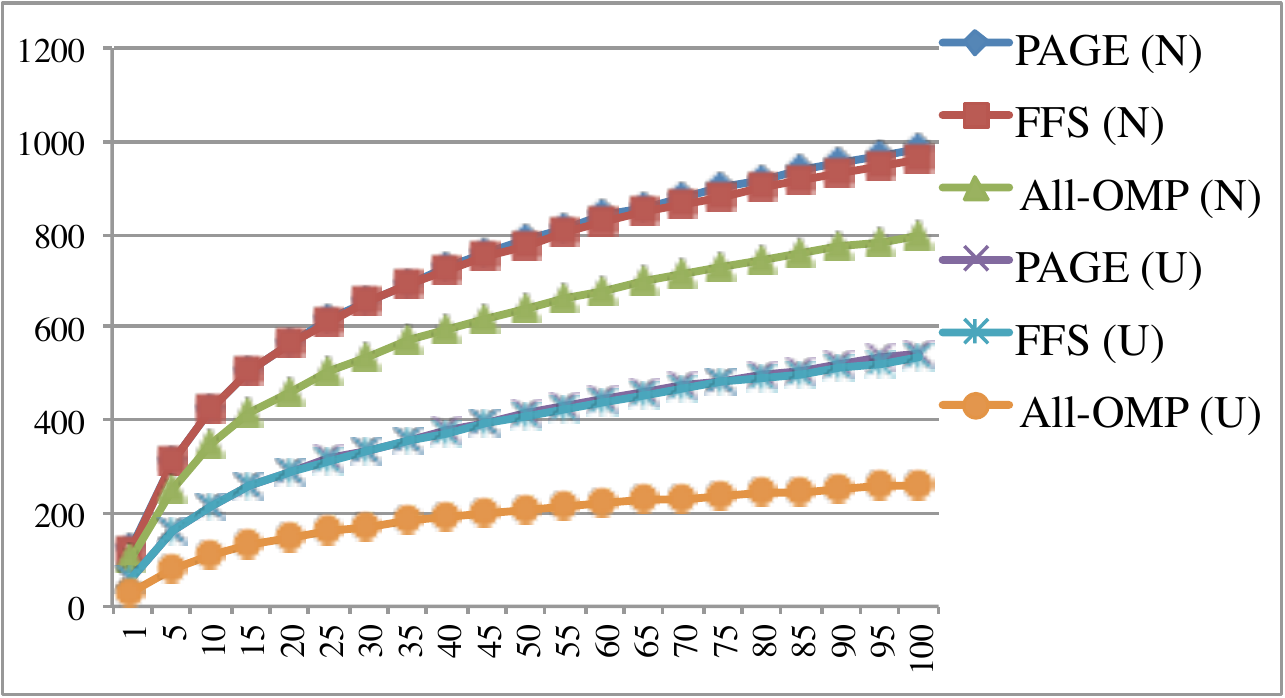} \\
    (a) {WD with $c_a = 0.1$}  & (b) {WD with $c_a = 0.001$} \\
    \includegraphics[width=.45\textwidth]{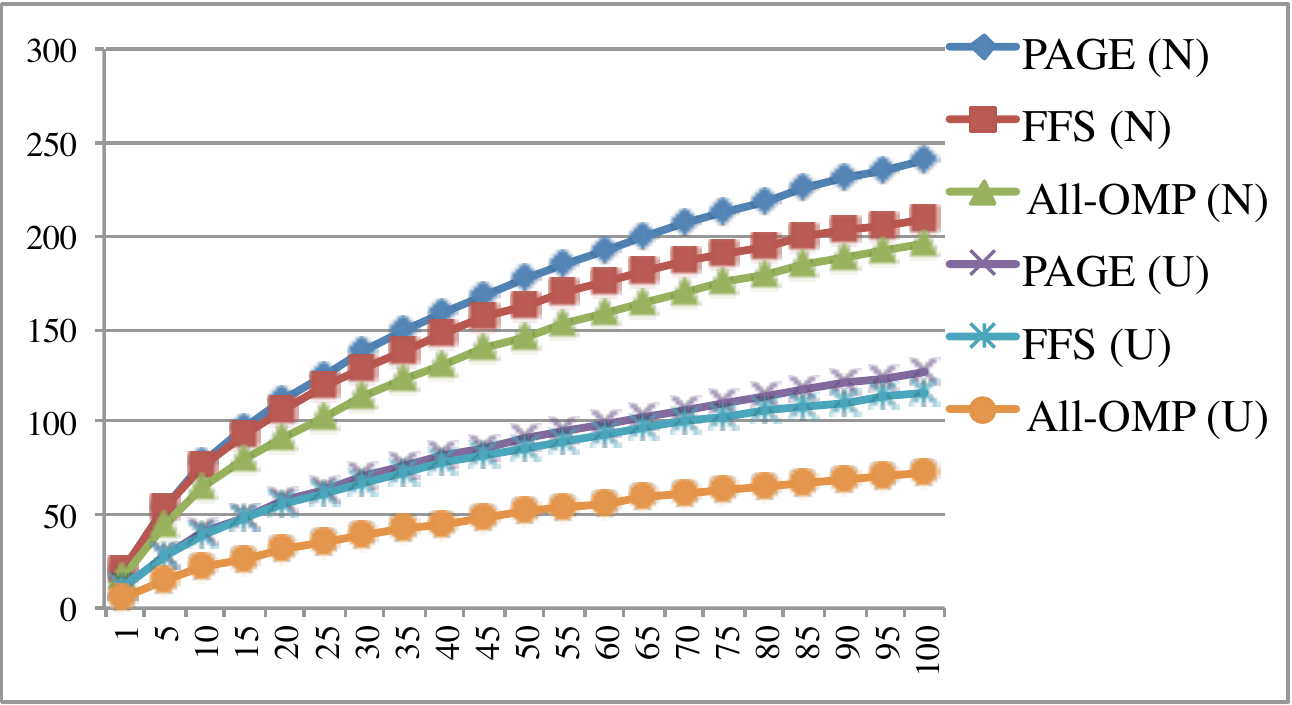}&
    \includegraphics[width=.45\textwidth]{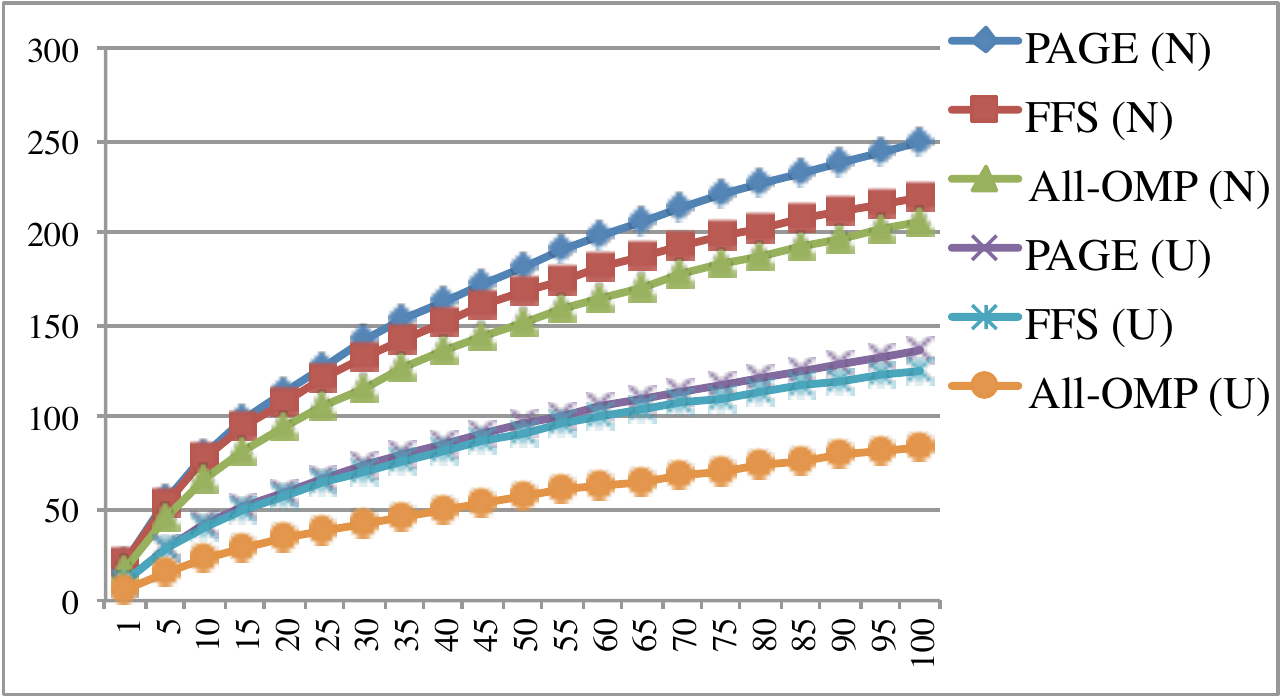}\\
    (c) {TV with $c_a = 0.1$} & (d) {TV with $c_a = 0.001$}  \\
\end{tabular}
\caption{Expected profit achieved (Y-axis) on Epinions graphs w.r.t.\ $|S|$ (X-axis). (N)/(U) denotes normal/uniform distribution.}
\label{fig:epiSpread}
\end{figure*}

\begin{figure*}[h!t!]
\begin{tabular}{cc}
    \includegraphics[width=.45\textwidth]{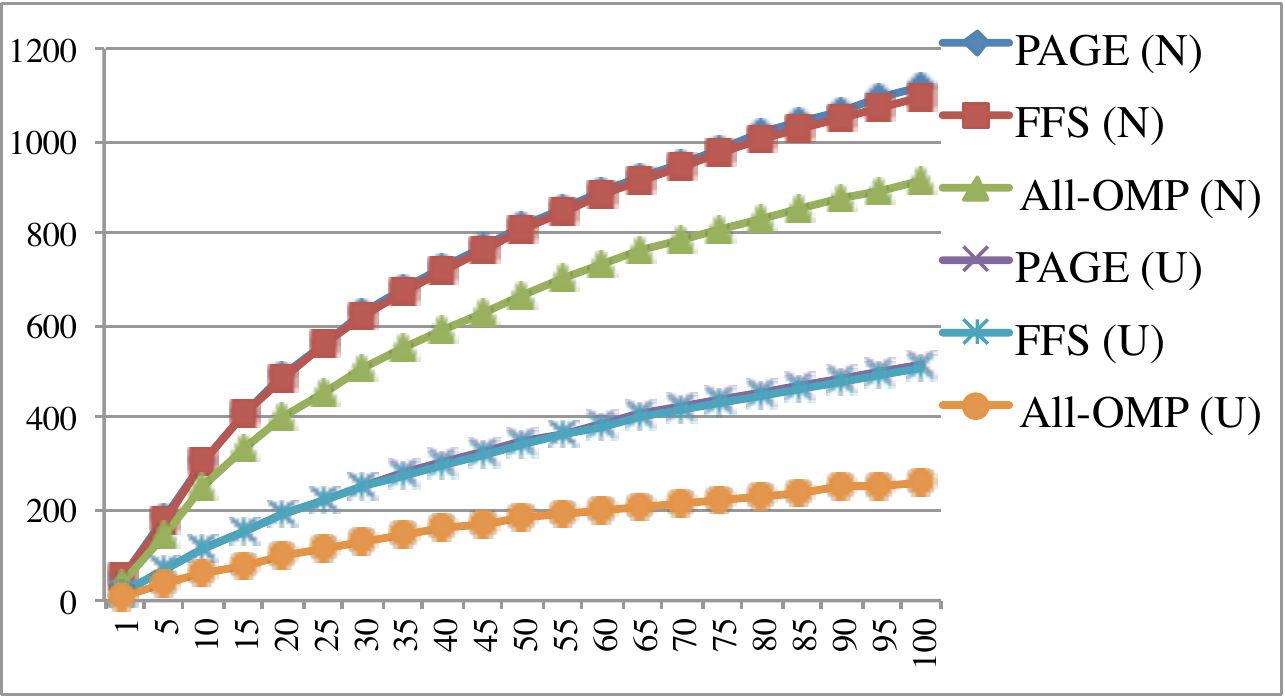} &
    \includegraphics[width=.45\textwidth]{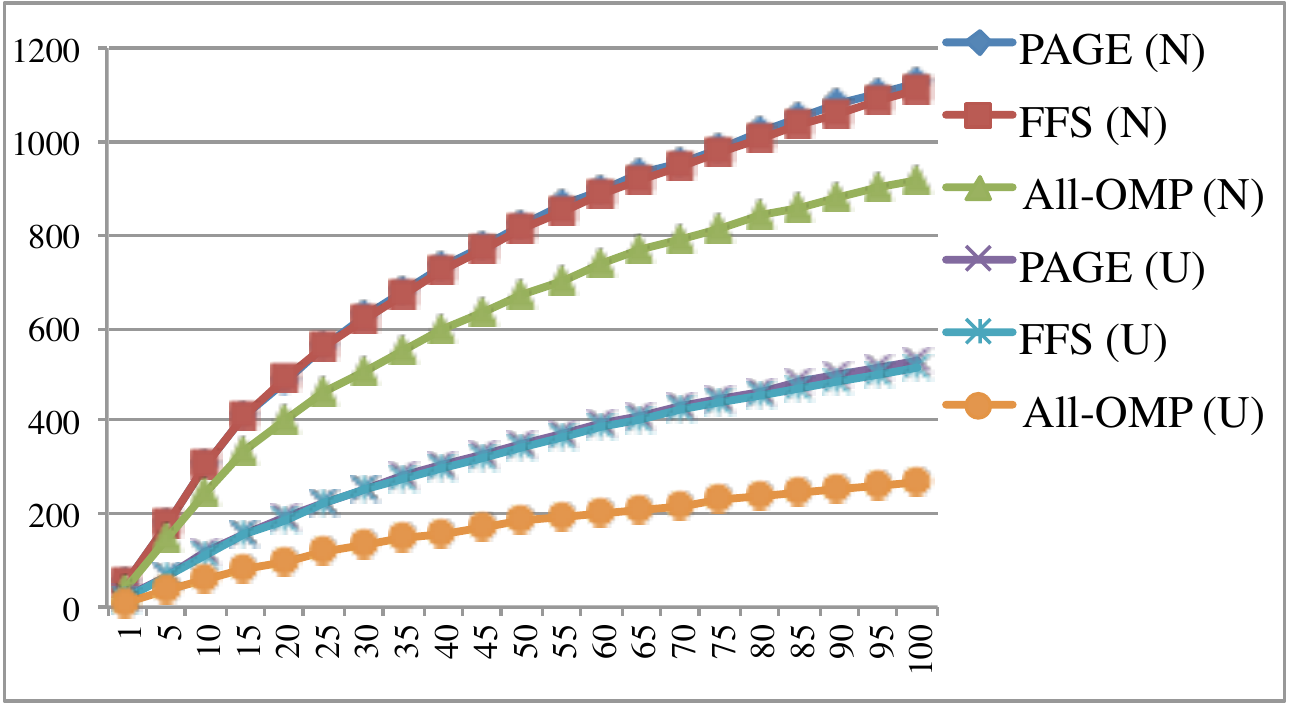} \\
	(a) {WD with $c_a = 0.1$}  & (b) {WD with $c_a = 0.001$} \\
    \includegraphics[width=.45\textwidth]{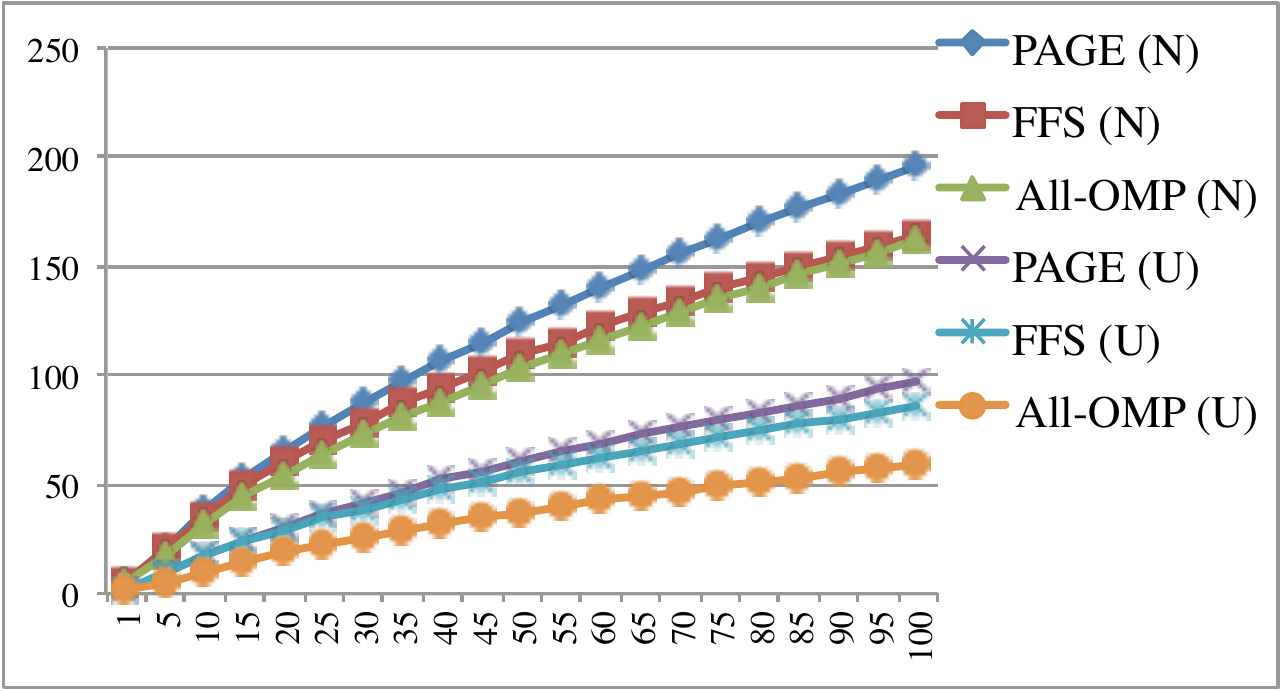}&
    \includegraphics[width=.45\textwidth]{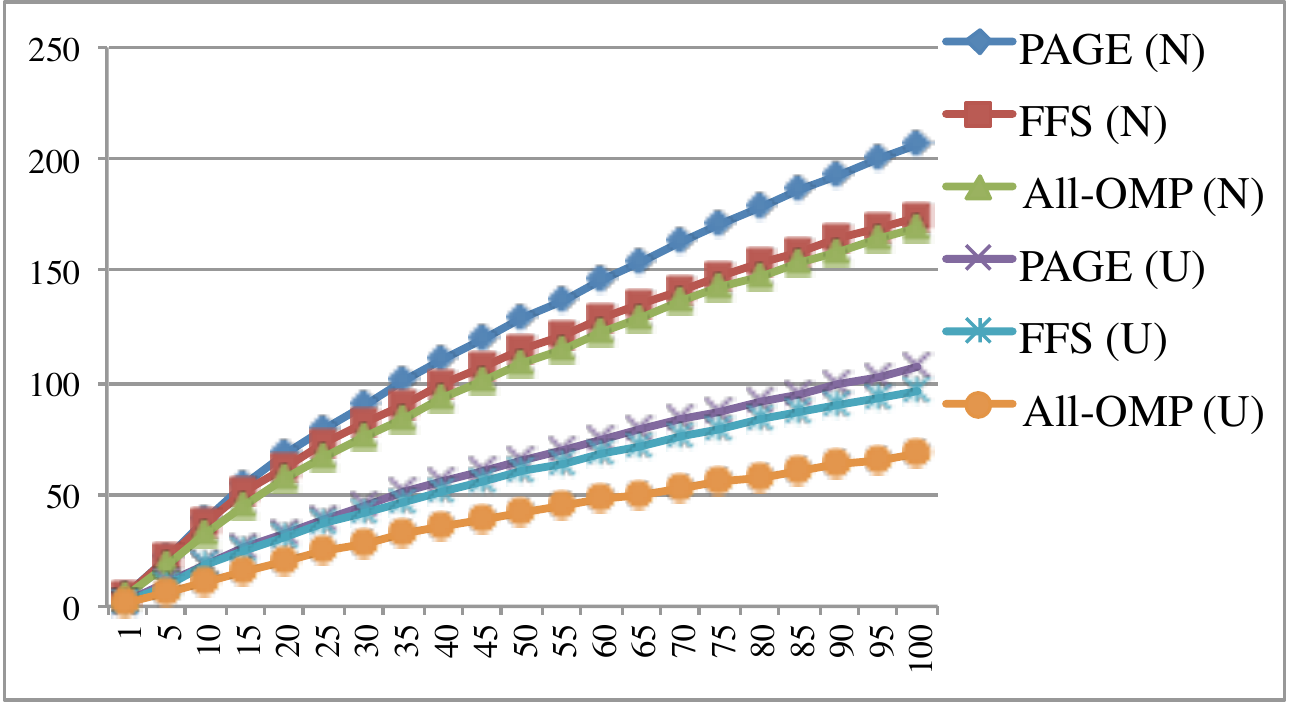}\\
	 (c) {TV with $c_a = 0.1$} & (d) {TV with $c_a = 0.001$}  \\
\end{tabular}
\caption{Expected profit achieved (Y-axis) on Flixster graphs w.r.t.\ $|S|$ (X-axis). (N)/(U) denotes normal/uniform distribution.}
\label{fig:flixSpread}
\end{figure*}

\begin{figure*}[h!t!]
\begin{tabular}{cc}
    \includegraphics[width=.45\textwidth]{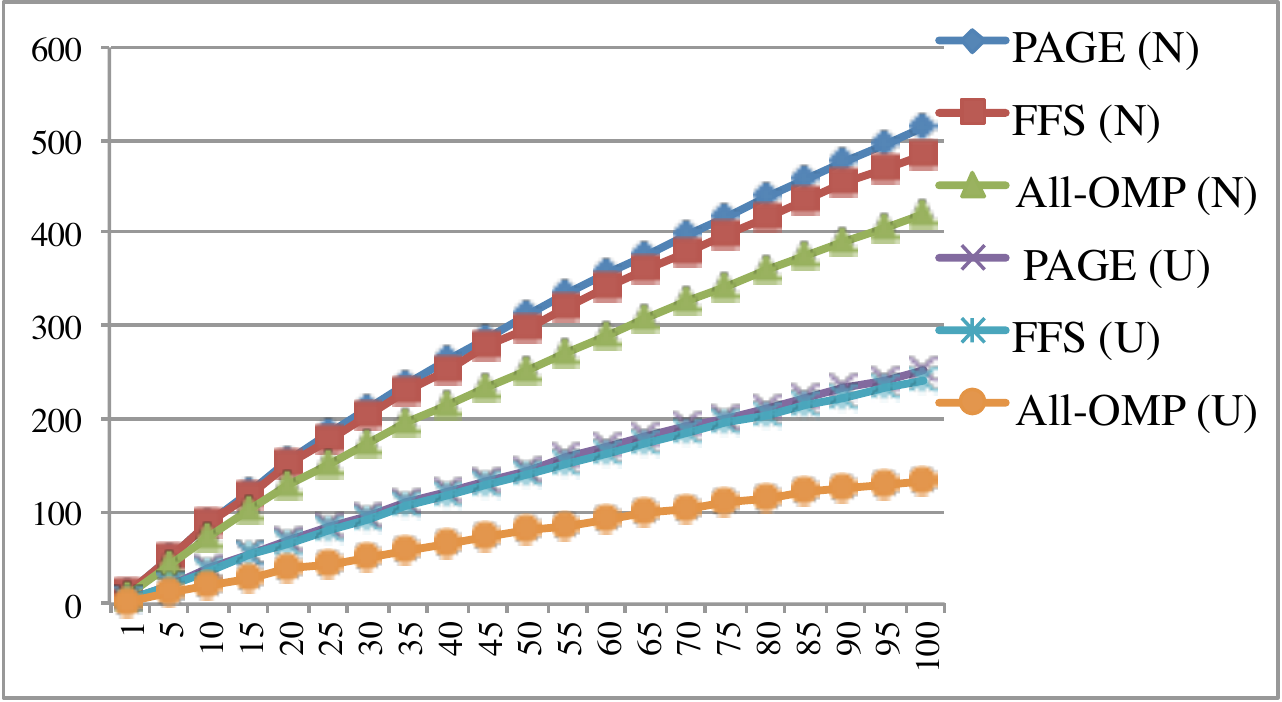}&
    \includegraphics[width=.45\textwidth]{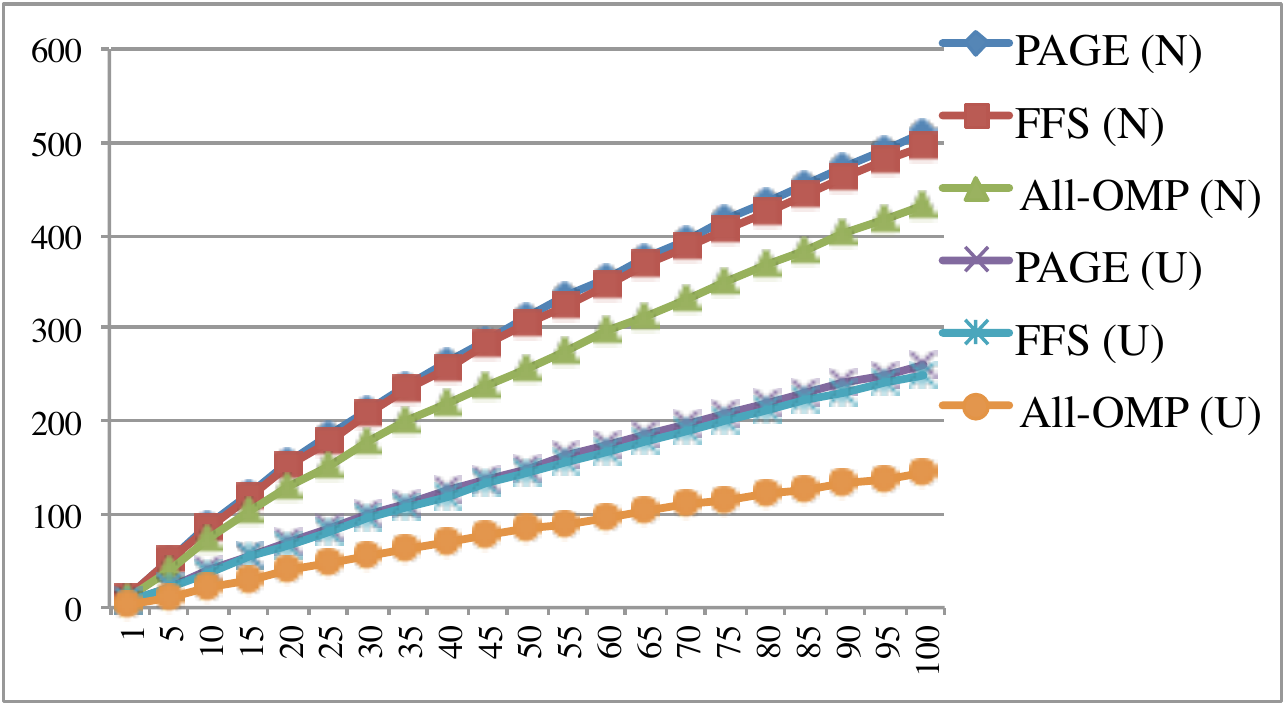} \\
(a) {\small WD with $c_a = 0.1$}  & (b) {\small WD with $c_a = 0.001$} \\
    \includegraphics[width=.45\textwidth]{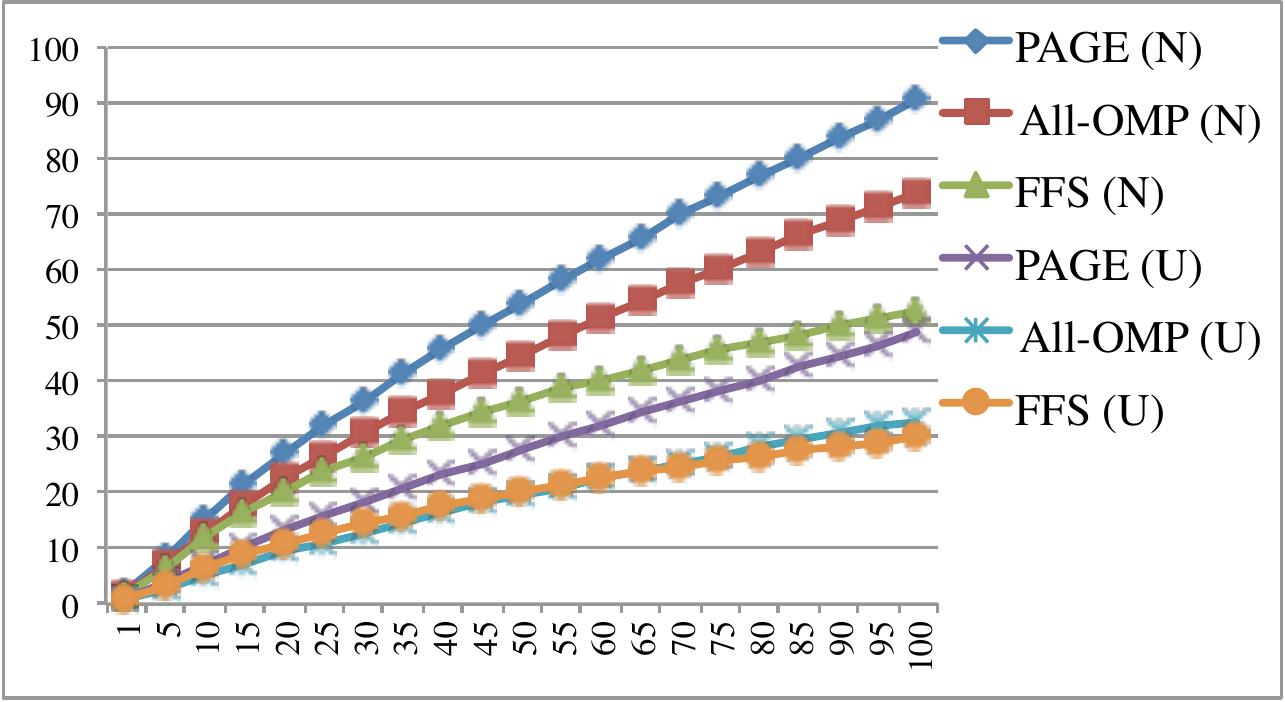}&
    \includegraphics[width=.45\textwidth]{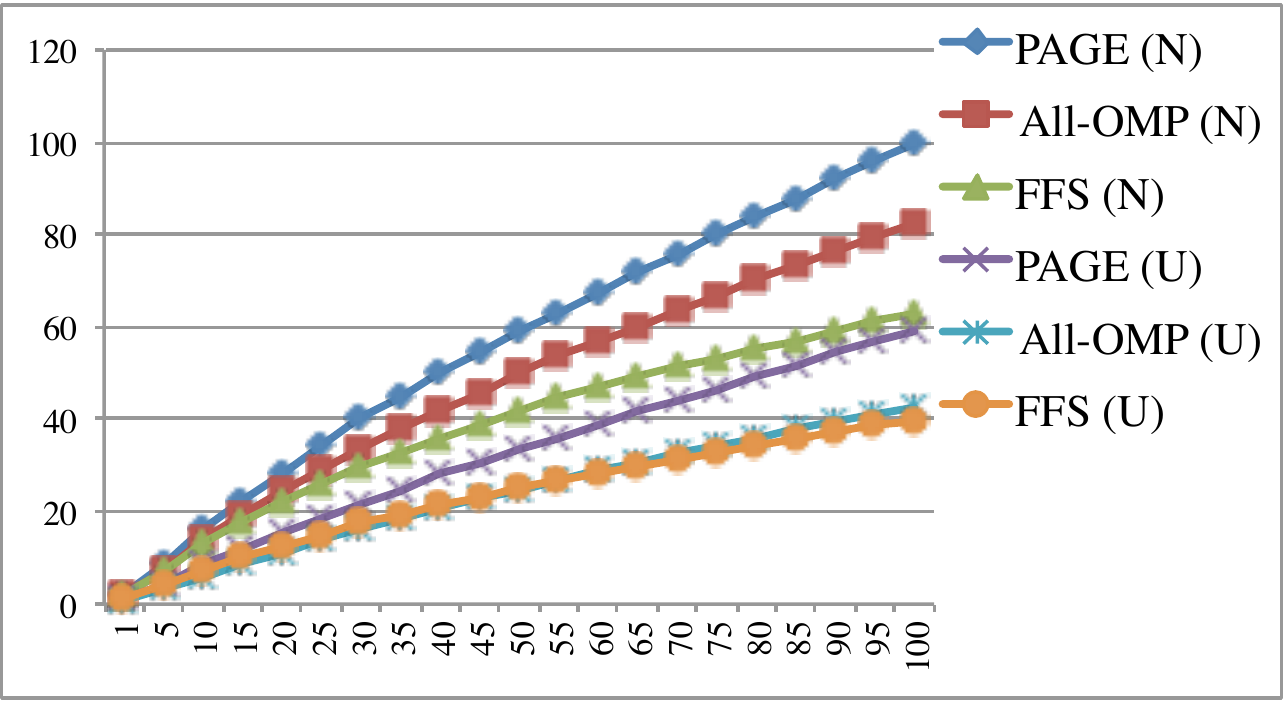}\\
	 (c) {\small TV with $c_a = 0.1$} & (d) {\small TV with $c_a = 0.001$}  \\
\end{tabular}
\caption{Expected profit achieved (Y-axis) on NetHEPT graphs w.r.t.\ $|S|$ (X-axis). (N)/(U) denotes normal/uniform distribution.}
\label{fig:hepSpread}
\end{figure*}

\paragraph{Expected Profit Achieved.}
The quality of outputs (seed sets and price vectors) of \allomp{}, \ffs{}, and \page{}
 	for general \promax{} are evaluated based on the expected profit achieved.
Fig.~\ref{fig:epiSpread}, \ref{fig:flixSpread}, and \ref{fig:hepSpread} illustrate
	the results on Epinions, Flixster, and NetHEPT, respectively.
On each network, both valuation distributions are tested in four settings:
	WD weights with $c_a = 0.1$ and $0.001$; TV weights with $c_a = 0.1$ and $0.001$.
As prices and valuations are in $[0,1]$, we use $0.1$ to simulate
	high acquisition costs and $0.001$ for low costs.
Except for NETHEPT-TV with $c_a = 0.1$ (Fig.~\ref{fig:hepSpread}c)
	and $0.001$ (Fig.~\ref{fig:hepSpread}d), \ffs{}  is better than \allomp{};
	this indicates that only in NetHEPT-TV,
	influence is low enough so that giving free samples blindly to all seeds do impair profits.

In all test cases, 
\page\ performed consistently better than \ffs{} and \allomp{}.
The margin between \page{} and \ffs{} is higher in TV graphs (by, e.g., $15\%$ on Epinions-TV with $\N(0.53,0.14^2)$, $c_a=0.1$) than that in WD graphs (by, e.g., $2.1\%$ on Epinions-WD with $\N(0.53,0.14^2)$, $c_a=0.1$), as higher influence in WD graphs can potentially bring more compensations for profit loss in seeds for \ffs{}.
Also, the expected profit of all algorithms under $\N(0.53,0.14^2)$
	is higher than that under $\U(0,1)$, since adoption probabilities
	under $\N(0.53,0.14^2)$ are higher.

\begin{figure}[h!t!]
\centering
\includegraphics[width=0.55\textwidth]{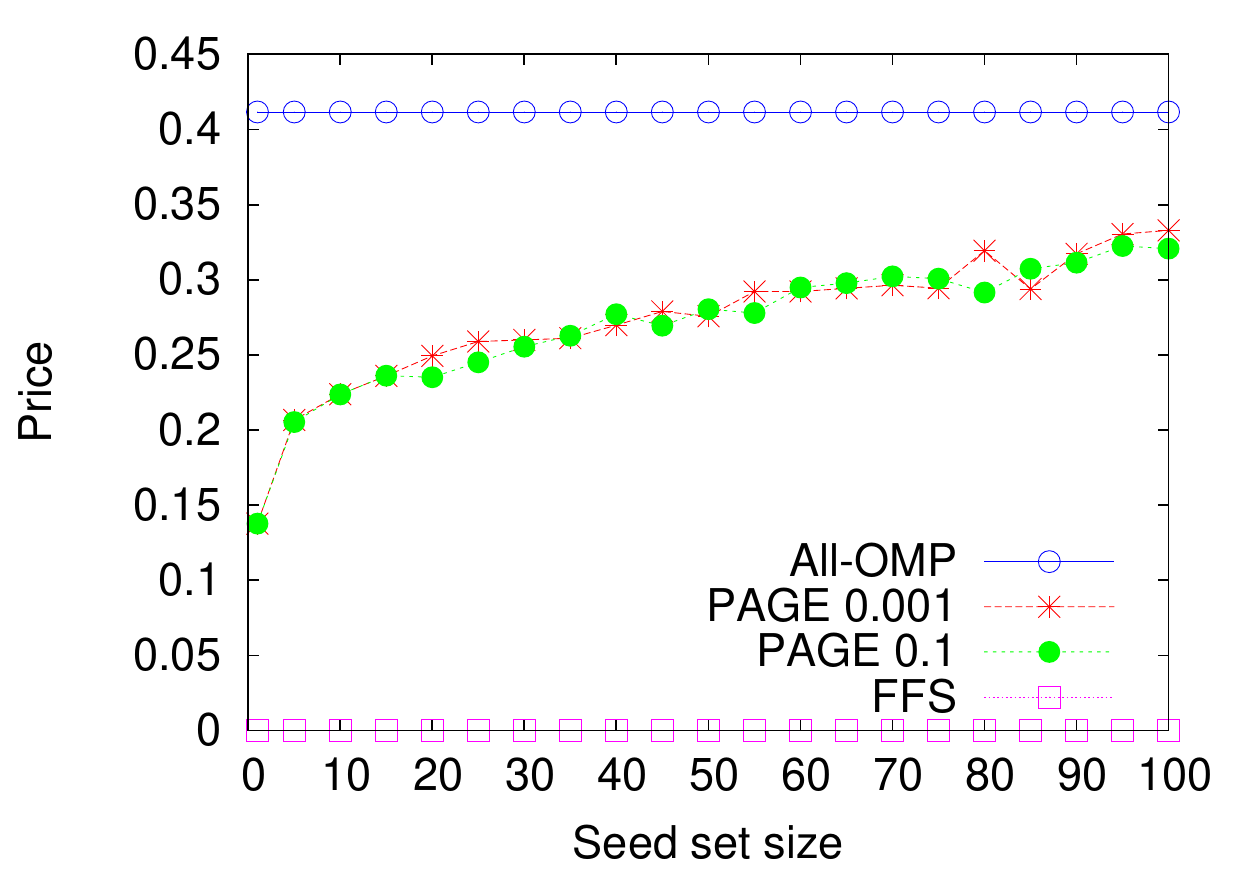}
\caption{Price assigned to seeds (Y-axis) w.r.t.\ $|S|$ (X-axis) on Epinions-TV with $\N(0.53,0.14^2)$.}
\label{fig:page-price}
\end{figure}

\paragraph{Price Assignments.}
For $\N(0.53,0.14^2)$ and $\U(0,1)$, the OMP is $0.41$ and $0.5$, respectively.
Fig.~\ref{fig:page-price} demonstrates the prices offered to each seed by
	\allomp{}, \ffs{}, and \page{} on Epinions-TV with $\N(0.53,0.14^2)$\footnote{Similar results can be seen in other cases, which we omit here.}.
\allomp{} and \ffs{} assigns $0.41$ and $0$ for all seeds, respectively,
For \page{}, as the seed set grows, price tends to increase, reflecting
 	the intuition that discount is proportional to the influence (profit potential) of seeds,
	as they are added in a greedy fashion and those added later 
	have diminishing profit potential.

\begin{table}[h!t!]
	\centering
\begin{tabular}{|c|c|c|c|c|c|c|}
 	 \hline
 	\multirow{2}{*}{\textbf{Algorithm}} &
  	\multicolumn{2}{c|}{\textbf{Epinions-WD}} & 
  	\multicolumn{2}{c|}{\textbf{Flixster-WD}} &
 	\multicolumn{2}{c|}{\textbf{NetHEPT-WD}} \\ \cline{2-7} 
	& $\N$ & $\U$ & $\N$ & $\U$ & $\N$ & $\U$  \\ \hline 
	\allomp{} & 6.7 & 2.3 & 3.0 & 1.0 & 2.6 & 2.2  \\ \hline 
    	\ffs{} & 6.3  & 2.1 & 2.8 & 1.0 & 2.7 & 2  \\ \hline
    	\page{} & 4.8 & 1.3 & 2.3 & 0.5 & 1.0 & 0.9   \\ \hline
\end{tabular}
	\caption{Running time in hours (WD weights, $c_a=0.1$)}
	\label{table:timeWD}
\end{table}

\begin{table}[h!t!]
	\centering
\begin{tabular}{|c|c|c|c|c|c|c|}
 	 \hline
 	\multirow{2}{*}{\textbf{Algorithm}} &
  	\multicolumn{2}{c|}{\textbf{Epinions-TV}} & 
  	\multicolumn{2}{c|}{\textbf{Flixster-TV}} &
 	\multicolumn{2}{c|}{\textbf{NetHEPT-TV}} \\ \cline{2-7} 
	& $\N$ & $\U$ & $\N$ & $\U$ & $\N$ & $\U$  \\ \hline 
	\allomp{} & 5.1 & 2.4 & 1.4 & 1.0 & 2.3 & 2.1  \\ \hline 
    	\ffs{} & 5.5 & 2.5 &  1.5 & 0.8  &  2.2 & 1.8  \\ \hline
    	\page{} & 4.0  & 1.0 & 0.9 & 0.4 & 0.8 & 0.5   \\ \hline
\end{tabular}
	\caption{Running time in hours (TV weights, $c_a=0.1$)}
	\label{table:timeTV}
\end{table}

\paragraph{Running Time.}
Tables~\ref{table:timeWD} and \ref{table:timeTV} present the running time
	of all algorithms on the three networks with WD weights and TV
	weights, respectively\footnote{The results for $c_a=0.001$ are similar, which are omitted here.}.
As adoption probabilities under $\N(0.53,0.14^2)$ are higher,
	all algorithms ran longer with the normal distribution on all graphs.
Similarly, as influence in WD graphs are higher, the running time on them is longer than that on TV graphs.

\allomp{} and \ffs{} have roughly the same running time.
More interestingly, \page{} is faster than both baselines in all cases.
The observation is that in each round of \UG{}, \page{} maximizes
	the marginal profit for each candidate seed in the priority queue
	maintained by CELF.
Thus, heuristically, the lazy-forward procedure in CELF (see \cite{Leskovec07}) has a better
	chance to return the best candidate seed sooner for \page{}.
\allomp{} and \ffs{} also benefit from CELF, but since the marginal profits
	of candidate seeds are often suboptimal, elements in the CELF queue tend to
	be clustered, and thus the lazy-forward is not as effective.
Besides, for \page{} under $\N(0.53,0.14^2)$, the golden section search
	usually converges in less than $40$ iterations with stopping criteria $10^{-8}$ (defined in Sec.~\ref{sec:page}); thus the extra overhead it brings is negligible compared to MC simulations.

To conclude, our empirical results on three real-world datasets with two different valuation distributions demonstrate that the \page{} algorithm consistently outperforms baselines \allomp{} and \ffs{} in both expected profit achieved and running time.
It is also the most robust (against various inputs) among all algorithms.

\section{Conclusions and Discussions}\label{sec:discuss}
In this work, we extend the classical LT model by incorporating prices and valuations to capture
	monetary aspects in product adoption, which we distinguish from social influence.
We study the profit maximization (\promax{}) problem under our proposed LT-V model, and prove \NPhard{ness} and submodularity results.
We propose the \page{} algorithm which dynamically determines the prices for nodes based on their profit potential.
Our experimental results show that \page{} outperforms the baselines in all aspects evaluated.

For future work, first, the added ingredients for LT-V can be used to extend models like IC~\cite{kempe03} and LT-C~\cite{smriti12}.
Second, the current algorithms cannot scale to larger graphs due to expensive MC simulations.
To achieve scalability, we can replace the MC simulations with fast heuristics for the LT model, e.g., LDAG~\cite{ChenYZ10} and SimPath~\cite{simpath}.

Another extension is to consider users' spontaneous interests in product adoption, and incorporate it into the LT-V model for profit maximization.
Due to personal demand, a user may have spontaneous interests in a certain product even when no neighbor in the network has adopted.
To model this, each node $u_i$ is associated with a ``network-less'' probability $\delta_i$~\cite{domingos01}.
An inactive node becomes \emph{influenced} when the sum of $\delta_i$ and the total influence from its \emph{adopting} neighbors are at least $\theta_i$.
A marketing company can thus wait for spontaneous adopters to emerge first and propagate their adoption (for $\ell$ time steps, where $\ell$ is the diameter of $G$), and then deploy a viral marketing campaign to maximize the expected profit.
Our analysis and solution framework (Secs.~\ref{sec:model}, \ref{sec:algo}) can be naturally applied to this setting.

In addition, it is interesting to look into more sophisticated methodologies to acquire knowledge on user valuations, e.g., by leveraging users’ full previous transaction history, as well as look at real datasets besides \texttt{Epinions.com}.


\section*{Acknowledgments}
We thank Wei Chen for helpful comments and discussions on an earlier draft of this paper.
We thank Pei Lee for his help in preparing Epinions price data.
We also thank Allan Borodin, Kevin Leyton-Brown, Min Xie, and Ruben H.\ Zamar for helpful conversations.

This research was supported by a grant from the Business Intelligence Network (BIN) of the Natural Sciences and Engineering Research Council (NSERC) of Canada.

\bibliographystyle{abbrv}
\bibliography{singlebib}

\end{document}